\newtheorem{theorem}{Theorem}
\def\BibTeX{{\rm B\kern-.05em{\sc i\kern-.025em b}\kern-.08em
    T\kern-.1667em\lower.7ex\hbox{E}\kern-.125emX}}
\begin{document}

\title{Towards Optimizing Storage Costs on the Cloud
}


\author{\IEEEauthorblockN{Koyel Mukherjee*, Raunak Shah*, Shiv Saini}
\IEEEauthorblockA{\textit{Adobe Research} \\
\{komukher, raushah, shsaini\}@adobe.com
}
\and
\IEEEauthorblockN{Karanpreet Singh\textsuperscript{\textdagger}, Khushi\textsuperscript{\textdagger},\\Harsh Kesarwani\textsuperscript{\textdagger}, Kavya Barnwal\textsuperscript{\textdagger}}
\IEEEauthorblockA{\textit{IIT Roorkee}\\
ksingh2@cs.iitr.ac.in, khushi@ee.iitr.ac.in, \\ hkesarwani@ee.iitr.ac.in, kbarnwal@cs.iitr.ac.in}
\and
\IEEEauthorblockN{Ayush Chauhan\textsuperscript{\textdaggerdbl}}
\IEEEauthorblockA{\textit{UT Austin} \\
ayushchauhan@utexas.edu}
}



\maketitle
\begingroup\renewcommand\thefootnote{*}
\footnotetext{These authors contributed equally.}
\endgroup
\begingroup\renewcommand\thefootnote{\textsuperscript{\textdagger}}
\footnotetext{Work done while interning at Adobe Research, India.}
\endgroup
\begingroup\renewcommand\thefootnote{\textsuperscript{\textdaggerdbl}}
\footnotetext{Work done while employed with Adobe Research, India.}
\endgroup

\begin{abstract}
  We study the problem of optimizing data storage and access costs on the cloud 
  while ensuring that the desired performance or latency is unaffected. We first propose an optimizer that optimizes the data placement tier (on the cloud) and the choice of compression schemes to apply, for given data partitions with temporal access predictions. Secondly, we propose a model to learn the compression performance of multiple algorithms across data partitions in different formats to generate compression performance predictions on the fly, as inputs to the optimizer. Thirdly, we propose to approach the data partitioning problem fundamentally differently than the current default in most data lakes where partitioning is in the form of ingestion batches. We propose access pattern aware data partitioning and formulate an optimization problem that optimizes the size and reading costs of partitions subject to access patterns.
  
  We study the various optimization problems theoretically as well as empirically, and provide theoretical bounds as well as hardness results. We propose a unified pipeline of cost minimization, called SCOPe that combines the different modules. We extensively compare the performance of our methods with related baselines from the literature 
  on TPC-H data as well as enterprise datasets (ranging from GB to PB in volume) and show that 
  SCOPe substantially improves over the baselines. We show significant cost savings compared to platform baselines, of the order of 50\% to 83\% on enterprise Data Lake datasets that range from terabytes to petabytes in volume. 
\end{abstract}

\begin{IEEEkeywords}
storage costs, multi-tiering, compression, data partitioning, optimization
\end{IEEEkeywords}

\section{Introduction}
\label{sec:intro}
Customer activities on the internet generate a huge amount of data daily. This is usually stored in cloud platforms such as Azure, AWS, Google Cloud etc. and is used for various purposes like analytics, insight generation and model training. With the massive growth in data volumes and usages, the costs of storing and accessing data have spiraled to new heights, significantly increasing the COGS (cost of goods sold) for enterprises, thus making big data potentially less profitable.

Cloud storage providers offer a tiered form of storage that has different costs and different throughput and latency limits across tiers. Table \ref{tab:azure} shows the storage costs, read costs and latency (measured as the time to first byte) for the different tiers of storage offered by Azure, a popular cloud storage provider. There is a clear trade-off between storage cost, read cost and latency across the tiers.

\begin{table}[htbp]
\caption{Cost and latency numbers for Azure\cite{azure}.}
\label{tab:azure}
\centering
 \begin{tabular}{|l|l|l|l|l|}
\hline
 & Premium & Hot & Cool & Archive \\ \hline
Storage cost cents/GB & \multirow{2}{*}{15} & \multirow{2}{*}{2.08} & \multirow{2}{*}{1.52} & \multirow{2}{*}{0.099} \\
(first 50 TB) &  &  &  &  \\ \hline
Read cost (cents, every 4 & \multirow{2}{*}{0.182} & \multirow{2}{*}{0.52} & \multirow{2}{*}{1.3} & \multirow{2}{*}{650} \\
MB per 10k operations) &  &  &  &  \\ \hline
Time to first byte & Single & ms & ms & \multirow{2}{*}{Hours} \\
 & digit ms &  &  &  \\ \hline
\end{tabular}
\end{table}

To further complicate things, enterprise workloads often exhibit non-trivial and differing patterns of access. Data access patterns are often highly skewed; only a few datasets are heavily accessed and most datasets see very few or $0$ accesses. Another common trend is recency, i.e., access frequency falls with age of dataset or file (Fig \ref{fig:data-access-stats2}). There are other interesting patterns too as shown in Fig \ref{fig:data-accesses-revised}. While some data see a constant number of read or write accesses, others may see periodic peaks, or read accesses decreasing over time. 
For marketing use cases, data is ingested for activation, leading to one-time read and write spikes followed by long inactive periods.
\begin{figure}[htbp]
\begin{minipage}[b]{0.49\linewidth}
\centering
  \includegraphics[width=\linewidth]{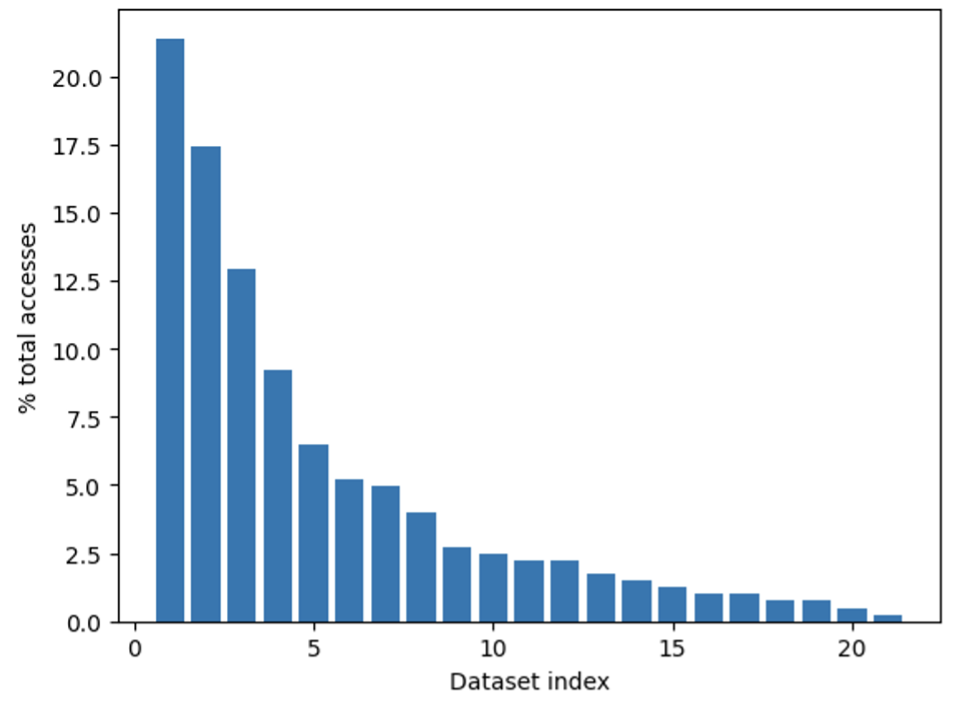}
  \subcaption{\small{\% accesses vs dataset index}}
\end{minipage}
\hfill
\begin{minipage}[b]{0.49\linewidth}
\centering
  \includegraphics[width=\linewidth]{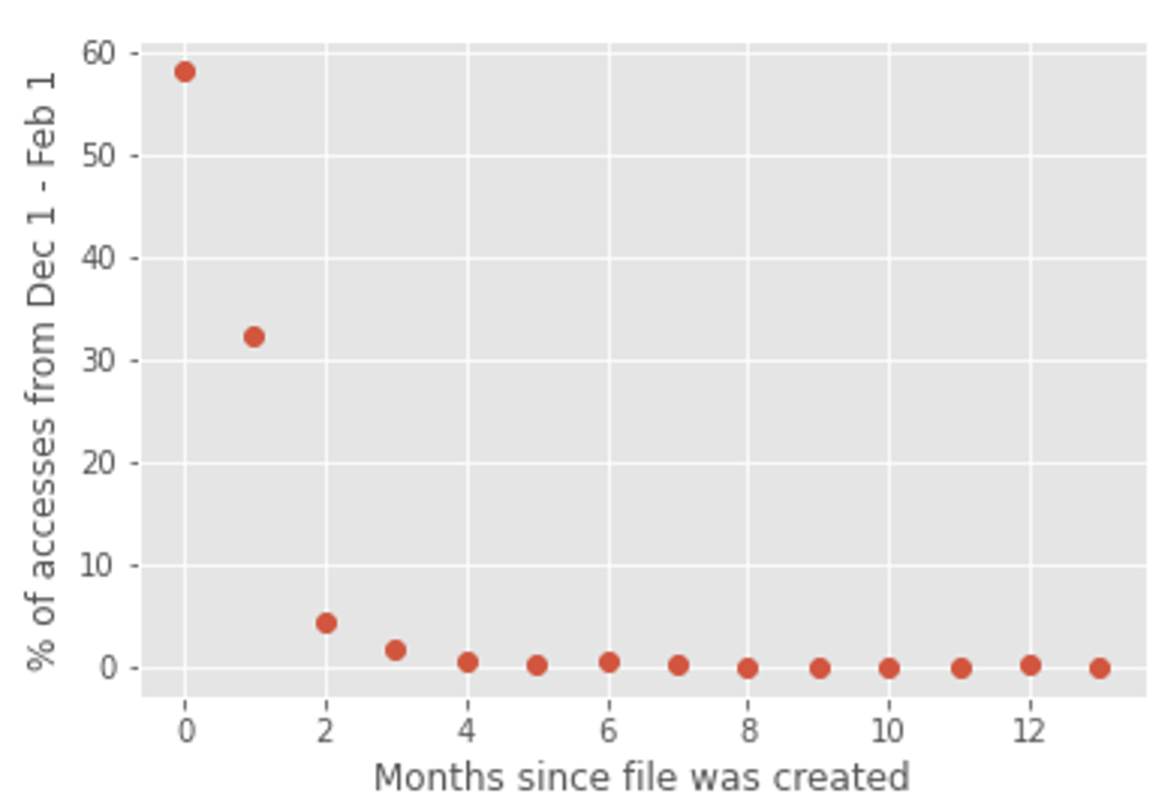}
  \subcaption{\small{\% accesses vs months since file was created}}
\end{minipage}
\caption{Enterprise Data access patterns}
\label{fig:data-access-stats2}
\end{figure}
Rule based methods such as pushing data to cooler tiers after $x$ days of inactivity fail due to seasonality and periodicity in access patterns, such as year-on-year analysis. Other intuitive rules such as caching the most recently accessed data in hot tiers are also ineffective, because even with a consistent number of accesses, it is non-trivial to determine the right tier for the data\footnote{Some cloud storage providers have recently started providing tiering and lifecycle management options based on last access \cite{aws-tiering, azure-lifecycle}. However, such methods are oblivious to varied and long term data usage patterns such as seasonal trends, year on year analysis, as well as required SLAs on certain types of data even if they are accessed less frequently.}.
Caching rules generally consider access related information (recency of access or frequency of accesses) for goals different from storage cost optimization. 
The optimal storage tier would depend on a trade-off between the storage costs (as determined by the size of the dataset and storage cost per unit size in each tier), access costs (determined by the amount of data being accessed, the access frequency, cloud provider's costs for different access types per unit size of data), the tier change costs, as well as the SLAs (i.e., latency and availability agreements with the clients). 

\begin{figure}[htbp]
\centering
  \includegraphics[width=\linewidth]{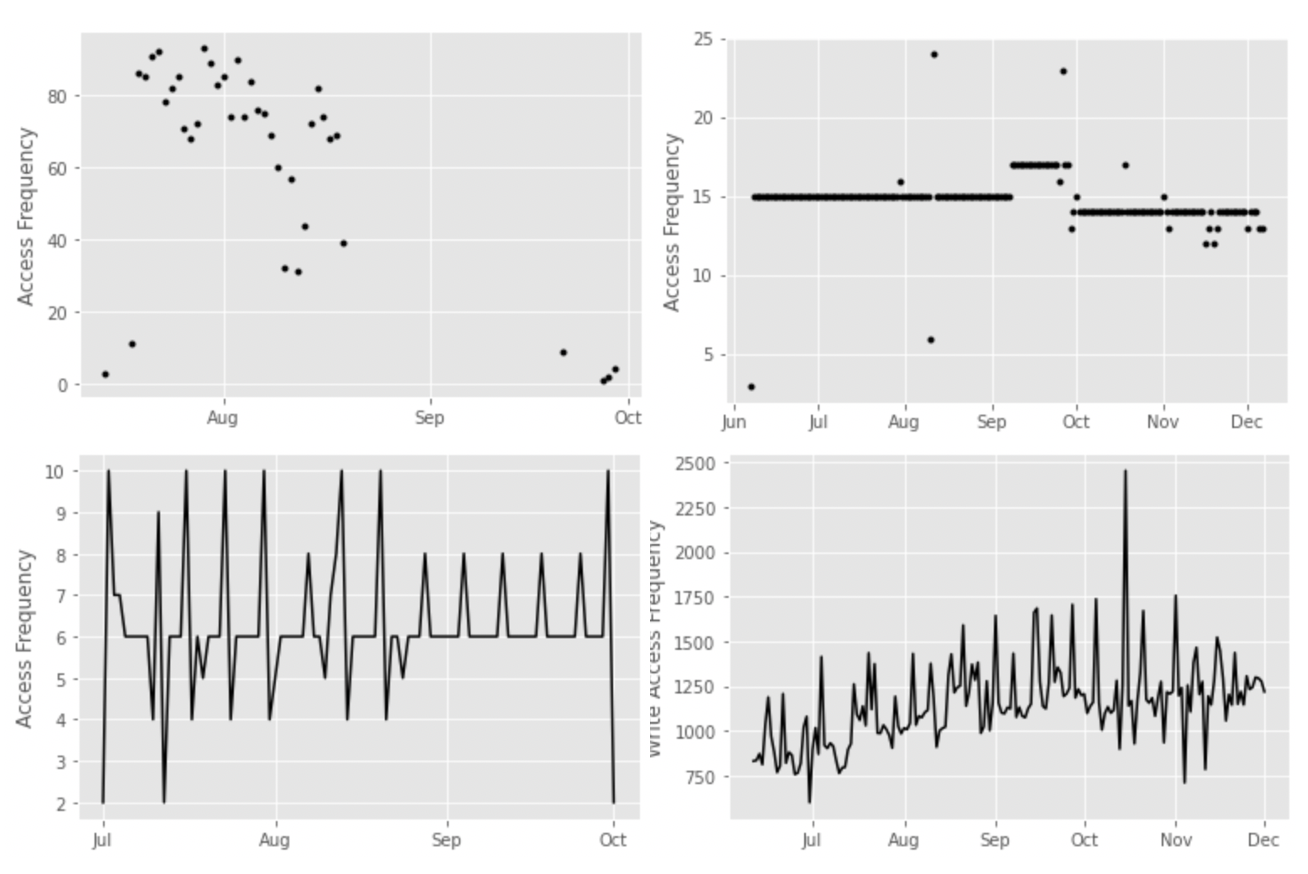}
\caption{Enterprise workloads on our data lake. \textbf{Top-left:} Read accesses decreasing over time for a particular dataset. \textbf{Top-right:} Read accesses remaining relatively constant over time for a particular dataset. \textbf{Bottom-left:} Periodic trend of read accesses for a certain class of datasets. \textbf{Bottom-right:} Write access trend across all datasets}
\label{fig:data-accesses-revised}
\end{figure}
Datasets in enterprise setting are often large, of the order of terabytes. Often, certain parts (files) of the data are more ``important'' or heavily accessed than the others, hence applying lifecycle management on entire datasets as a whole might be inefficient or sub-optimal.  Recently, workload aware and adaptive data partitioning schemes have been explored in the literature \cite{brendle2022sahara, sun2014fine, lu2017adaptdb, shanbhag2017robust, liwen_columnar_partitioning, mto_kraska}. However, these approaches often suffer from scalability issues as dataset sizes increase or query workloads becomes richer, and can have difficulty 
in dynamically adapting to changing workloads.

Another approach used to reduce the storage and read costs is data compression \cite{ares, devarajan2020hcompress}, but this can potentially add the overhead of decompression, which might increase the latency and compute cost. There have been efforts that enable efficient queries and analytics directly on compressed and/or sampled data, discussed further in Section II. A solution is thus needed which reduces the total storage, read, write and compute costs (which includes the necessary decompression costs, where unavoidable) from the cloud while meeting any service level agreements (SLAs).

\textbf{Our contributions:}
\begin{enumerate}
    \item We analyze the cost optimization problem theoretically and show it is strongly \textsc{NP-hard} (proof sketch). For special 
    cases, we give optimal, polynomial algorithms including an efficient greedy algorithm. 
    
    \item Our greedy algorithm is both scalable and effective. We apply it on enterprise Data Lake datasets of the order of \textbf{petabytes} in volume using real (historical) enterprise workloads that have substantial skew and other patterns. We show significant cost benefits, ranging from \textbf{~50\% to ~83\%} compared to platform baselines. The prediction model is near optimal with F1 $>0.96$ and does significantly better than intuitive baselines.
    
     \item We propose a Compression Predictor that predicts the compression ratio and decompression times for popular compression schemes on data partitions in different storage formats (csv, parquet) with good accuracy. We empirically study different features, models, data layouts, and data distributions across multiple schemes. 
     
    \item We study the access-pattern-aware data partitioning problem and show that it is strongly \textsc{NP-hard} (proof sketch). We propose a heuristic that achieves a good trade-off of space and cost empirically. For time series data, we give polynomial time approximation schemes. 
    
    \item We propose a unified pipeline of the above optimizers, predictor, and partitioner called SCOPe: Storage Cost Optimizer with Performance Guarantees, that allows tunable, access pattern aware storage 
    and access cost optimization on the cloud while maintaining SLAs. We provide substantial empirical validation and ablation studies on enterprise and TPC-H data. SCOPe outperforms related baselines by a significant margin. We also show that applying our partitioning heuristic can directly improve the baselines.
\end{enumerate}
The rest of the paper is organized as follows. We discuss the related work in Section \ref{sec:related}. We discuss the problem setting and datasets in Section \ref{sec:problem}. We study the cost optimization in Section \ref{sec:optimize}. We explain the compression predictor in Section \ref{sec:compression}. We study the query aware data partitioning problem in details in Section \ref{sec:partition}. We discuss the entire pipeline SCOPe (in comparison with baselines) in Section \ref{sec:experiments}. Finally, we conclude in Section \ref{sec:conclusion}.

\section{Related Work}
\label{sec:related}
Different aspects of the cloud storage problem have been studied in the literature. 

\textbf{Multi-tiering}
Recently, there has been some work that exploits multi-tiering to optimize performance, e.g., \cite{hermes, devarajan2020hcompress, devarajan2020hfetch, cheng2015cast} and/or costs, e.g., \cite{cheng2015cast, ERRADI2020110457, liu2019transfer, liu2021keep, si2022cost, kinoshita2021cost}. Storage and data placement in a workload aware manner, e.g., \cite{anwar2015taming, anwar2016mos, cheng2015cast} and in a device aware manner, e.g., \cite{vogel2020mosaic, lasch2022cost, lasch2021workload} have also been explored. \cite{devarajan2020hcompress} combine compression and multi-tiering for optimizing latency, but do not consider the storage, read and compute costs.   
Many of the existing works towards optimizing storage and read costs in a multi-tiered setting, e.g.,\cite{cheng2015cast, mansouricostoptim, mansouridatacenter, ERRADI2020110457, liu2019transfer, liu2021keep, si2022cost, kinoshita2021cost} propose policies for data transfer between tiers in online or offline settings, however there is no focus on data partitioning or data  compression. 
In general, we did not find a direct baseline for SCOPe that tries to optimize storage costs while maintaining latency guarantees, estimating compression performance and considering query aware data partitioning. However, we have modified some of the existing works on tiering as baselines for SCOPe, and we have extensively evaluated SCOPe against such baselines.
 
\textbf{Data Compression: }
Data compression has been heavily studied in the literature. 
While some of the works have studied compression that enables efficient querying, others have studied the memory footprint and cost aspects. 
Bit map compression schemes that enable efficient querying, such as WAH, EWAH, PLWAH, CONCISE, Roaring, VAL-WAH etc. \cite{wu2002compressing, wu2006optimizing, EWAH, PLWAH, roaring1, Roaring, concise, VALWAH} have been extensively studied, however these are more efficient and popular for read-only datasets. Several researchers have studied enabling efficient queries and analytics directly on compressed and/or sampled data, e.g.,  \cite{abadi2006integrating, succinct, compressdb, blinkdb}. Others have considered the cost aspect, e.g., \cite{ares, devarajan2020hcompress}. Yet others have looked at the dynamic estimation of compression performance, e.g., \cite{adaptive, devarajan2020hcompress} based on data type, size, similarity and distribution. 
In our setting, features suggested in literature did not work well, hence we proposed new, effective features. We have comprehensively evaluated multiple compression schemes, data layouts (parquet, csv, sorted data etc.), different sources of data (TPC-H and enterprise) and different query workload distributions (uniform, skewed), and evaluated the effect of prediction errors on the overall optimization.

The benefits of caching and computation pushdown in a disaggregated storage setting have been explored \cite{yang2021flexpushdowndb, yu2020pushdowndb}. However not all cloud service providers support computation pushdown for commonly used data formats such as parquet, and even if they do, it generally comes at a higher price. Moreover, not all types of queries can be accelerated by directly performing on compressed data or pushing down to the storage layer. Nevertheless, our optimization module can support query accesses that directly run on compressed data (i.e. no decompression) as well.

\textbf{Data Partitioning: }
Data partitioning has been studied in the database 
literature as a means of efficient query execution. 
Attribute based partitioning have been heavily studied 
\cite{jaya, zhou2010incorporating, sun2014fine, mto_kraska, liwen_columnar_partitioning, shanbhag2017robust, lu2017adaptdb}. 
Workload and workflow aware partitioning \cite{quiver, brendle2022sahara, diesel } has also been explored for different types of workflows. 
In particular, partitioning with respect to query workload has been explored by \cite{sun2014fine, mto_kraska, liwen_columnar_partitioning}. However, \cite{sun2014fine, liwen_columnar_partitioning} require row level labeling and hence the required compute is not scalable to enterprise data lake scale with $10^{12}$ rows in datasets of the order of terabytes.  \cite{liwen_columnar_partitioning} requires modification of data (by appending tuple ids to rows) which is difficult on client data due to access restrictions. \cite{liwen_columnar_partitioning} proposed tuple reconstruction which becomes non-trivial to apply in our setting in tandem with tiering and compression (for both cost and latency estimation to maintain performance guarantees), due to variability of parameters across tiers and compression schemes.\cite{sun2014fine, liwen_columnar_partitioning, mto_kraska} do not easily adapt to dynamically changing data and workloads. We give a novel graph based modeling at file and query level which is scalable, and easily adaptable to dynamically changing query workloads. Existing work considers disjoint partitions that often do not benefit the median query in a rich query set, whereas we allow overlapping partitions.

\section{Problem Setting}
\label{sec:problem}
Our problem is motivated by the cloud storage cost and performance considerations in the Adobe Experience Platform Data Lake, that is home to huge volumes of customer datasets, often being time series or event logs.
The data resides in the cloud (e.g., Azure) and we get a break-up of costs at the end of a billing period, e.g., a month. The cost is incurred for storage over a period of time and per every access. We consider optimizing the \textbf{total costs}\footnote{We use the cost parameters of ADLS Gen2. Similar analysis and modeling can be done with AWS, GoogleCloud and other cloud providers' parameters.} at the end of a billing period, say, `k' months for optimizing the COGS of the organization. Our model is run at the beginning of every billing period, to generate storage recommendations for all the datasets in the data lake for that billing period. The reader might be curious as to whether it adversely affects the cost to change storage tiers in such a periodic (batch processing) manner, instead of doing this much more frequently, in an ad-hoc way, especially if the data has an extremely fast changing access pattern. Note that changing storage tiers too frequently has the following disadvantages: i) early deletion charges per tier: data once moved to a tier, needs to reside there for a minimum period before we move it, otherwise we incur a penalty; ii) tier change costs:  for every tier change there are read and write costs incurred, iii) tier change scheduler: this would need to run with increased frequency, hence incurring high compute costs, offsetting the tier change benefits. Moreover, if data compression and data partitioning are also involved, frequent changes would result in a huge amounts of additional compute cost of data processing, thus increasing the COGS.


We study the following problems.
\begin{enumerate}
    \item \textsc{OptAssign}: Given predicted volume of accesses for datasets for a projected period, determine the optimal (in expectation) assignment of tier and compression scheme, while maintaining latency and capacity requirements.
    \item \textsc{ComPredict}: Accurately estimate the compression ratios and decompression speeds for different compression schemes on various datasets.
    \item \textsc{DataPart}: Access pattern aware optimal and efficient partitioning of data.
\end{enumerate} 
Finally, we present the results for the unified pipeline, SCOPe: Storage Cost Optimizer with Performance Guarantees. 

\textbf{Datasets and Workloads (Access logs):}
\begin{enumerate}
    \item Enterprise Data: 
    (a) \textbf{Enterprise Data I:} enterprise Data Lake data (ADLS Gen 2), with hundreds of datasets ranging from \textbf{TB to PB} in size for several customers. Here we only have access to meta data and historical access logs at dataset level. 
    (b) \textbf{Enterprise data II:} 3 tables for which queries as well as timestamp information are available. These are about 1.5 GB in total size. Here we have full access to data, but not the access logs, hence we have generated queries based on a skewed power-law (Zipf-like) distribution. 
    
    \item We use four variants of TPC-H data. It consists of 8 different tables of varying sizes with 22 different types of complex queries. We generated 20 queries from each query template and used them for experiments on our unified pipeline as well as for the compression predictor module in each case.
    (a) TPC-H 1GB with uniformly generated data, 
    (b) TPC-H Skew generated with Zipfian skew (high skew factor of 3),
    (c) TPC-H-100GB with uniformly generated data, and 
    (d) TPC-H 1TB with uniformly generated data.
    
\end{enumerate}
We used PostgreSQL for TPCH data, and Apache Spark for enterprise data. Our enterprise data is partitioned and stored on ADLS Gen2 in parquet format.

\section{\textsc{OptAssign}: Optimizing Overall Costs}
\label{sec:optimize}
\textsc{OptAssign} determines optimal (in expectation) assignment of tier and compression schemes, given data partitions with predicted number of accesses for the projected period. It assumes that compression performance prediction is available as a look up for the given data partitions, or, in absence of that, only optimizes the tier assignments. \textsc{OptAssign} maintains latency requirements, and also handles capacity constraints, in case there are storage  reservations on tiers.

\subsection{Mathematical Formulation}
Let the number of storage tiers or layers be $L$. The storage cost of layer $\ell \in [L]$ is $C^s_\ell$, read cost is $C^r_\ell$, write cost is $C^w_\ell$ and the read latency is $B_\ell$ seconds per unit data. Let the reserved capacity (space) for storage be $S_\ell$ and the compute cost per second be $C^c$. Layer $0$ denotes the lowest latency layer and $L-1$ denotes the archival layer with the highest latency. Typically, $S_{L-1}$ is $\infty$. 
The tier change cost from tier $u$ to tier $v$ is $\Delta_{u,v}$, and this includes reading from layer $u$, writing to layer $v$ and any other charges. 

Let there be $N$ data partitions $\mathcal{P}$ and for each partition $P_i$, the span (or, size) is $Sp(P_i)$, and the projected number of accesses is $\rho(P_i)$. There is also a latency threshold $T(P_i)$ associated with each partition. Among these, $\mathcal{I}$ (let $I = |\mathcal{I}|)$ denotes existing ones, and the remaining are newly ingested in the current billing period. The current tier assignment for $P_i$ is $L(P_i)$, and for newly ingested ones, we denote $L(P_i) = -1$. Now, the write costs for new partitions to tier $\ell$ can be written as $\Delta_{-1, \ell}$. In other words, $C^w_\ell = \Delta_{-1, \ell}$.  All existing partitions have a predicted number of accesses based on past behaviour and dataset characteristics. In the case of newly ingested data, this is approximately estimated based on data quality considerations, query patterns on similar historical data, or client specific/domain knowledge. There are $K$ compression algorithms, where one option is `no compression', and $R^k_i$ denotes the predicted compression ratio of algorithm $k$ on $P_i$, and similarly $D^k_i$ denotes the predicted decompression time (for `no compression', $R^k_i$ is $1$ and $D^k_i$ is $0$ for all $i$). The compression scheme applied to a partition $P_i$ is $K(P_i)$. 

We give an ILP for the problem \textsc{OptAssign}. $x_{n,\ell, k}$ is an indicator variable that is $1$ when 
partition $P_n$ is assigned to tier $\ell$ with compression scheme $k$, and 
$0$ otherwise. 
$\alpha$, $\beta$, $\gamma$ are hyper-parameters to decide the weight for corresponding cost terms.
The first term in the objective function represents the cost of writing data (either new data or existing data from another tier) and then storing it in a tier after applying a particular compression algorithm. The second term represents the 
expected decompression cost 
(compute cost) and read cost of the merged partitions. 
The first 
equality is for feasibility purposes: every partition must go to 
one tier and at most one compression algorithm can 
be applied to it. The second inequality constraint ensures that the data stored does not exceed the capacity for that layer, where the capacity 
is determined by capacity reservations on the cloud. (Note that in case there are 
no reservations, hence no upper bound, the capacity would be $\infty$.) The third inequality is to ensure that decompression and read don't cause overhead in latency, and are less than the maximum latency threshold for any partition. 
Finally, the last equality forces that for existing partitions, the compression scheme does not change once applied; this is imposed to prevent additional latency and operational costs of frequently changing the compression of data partitions. Note that the above ILP can become infeasible due to capacity restrictions and latency constraints. In that case, the latency requirements need to relaxed iteratively till a feasible solution is found. However, in this case, there can be no solution satisfying all constraints, and it is a limitation of the system constraints, and not the solution. 
\begin{align}
\label{eq:ilp}
  \begin{split}
  \min \sum_{n=1}^{N}\sum_{k=1}^{K}\sum_{\ell=1}^{L}  [\left(\alpha\ C^s_\ell + \gamma \Delta_{L(P_n), \ell}\right)\  \frac{Sp(P_n)}{R^k_n}& \\
  +\ \beta \rho(P_n)\left( C^c\ D^k_n \ +\ C^r_\ell \frac{Sp(P_n)}{R^k_n}\right)]\ x_{n, \ell, k}&
  \end{split}\\\nonumber
  \text{s.t. } \sum_{\ell=1}^{L} \sum_{k=1}^{K} x_{n, \ell, k} \ = \ 1\ ,\forall\  n\ \in [N]\\\nonumber
  \sum_{n=1}^{N}\sum_{k=1}^{K}{\frac{Sp(P_n)}{R^k_n} x_{n, \ell, k}} \ \leq S_\ell \ ,\forall\  \ell\ \in [L]\\\nonumber
  \sum_{\ell=1}^{L} \sum_{k=1}^{K} {\left( D^k_n \ + B_\ell\right) x_{n,\ell, k}} \ \leq \ T(P_n)\ ,\ n\ \in \ [N]\\\nonumber
  x_{n,\ell, k} \in \{0, 1\} \ \forall n \in [N],\  \ell \in [L], \ k \in [K]\\\nonumber
  x_{n,\ell, k} = 0 \ \forall n \in [I],\  \ell \in [L], \forall k \neq K(P_n)
\end{align}
The ILP can be extended to handle the scenario of computation pushdown or compression schemes allowing certain operations directly on the compressed data. Let $f$ fraction of queries be amenable to such a pushdown and the remaining $(1-f)$ fraction would require decompression. 
Then only $(1-f) \rho(P_i)$ would contribute to the read and decompression-compute costs in the objective function for partition $P_i$, and similarly to the latency constraint, while the remaining fraction, $f \rho(P_i)$ would have $0$ contribution to either. 
In this way \textsc{OptAssign} can handle a partial storage disaggregation. Also note that \textsc{OptAssign} is a general framework and can easily handle following scenarios: a total storage capacity provisioned per tier by the enterprise\footnote{In this case, the $L$  constraints on per layer storage capacity $S_\ell$ would be replaced by a single constraint, on the total sum of the storage used across all layers being bound by a capacity $S$.}, a customer specific capacity per tier \footnote{Here, the per layer storage constraint would be replaced by $Q$ constraints, one per customer for a total of $Q$ customers.}, an unlimited (infinite) capacity per tier where there is no pre-determined storage entitlement but billing is per usage\footnote{Here, $S_\ell = \infty \ \forall L$, hence the L constraints on capacity can be removed.} These variations would be determined from customer licensing agreements, pricing by the cloud storage operator, and internal cost considerations and demand projections. Moreover, by tuning the weights in the objective, one can give more weight to one type of cost over others as required by the application. (We show this in Section \ref{sec:experiments}).

\begin{theorem}
\label{thm:opt-np-hard}
\textsc{OptAssign} is strongly \textsc{NP-hard}. 
\end{theorem}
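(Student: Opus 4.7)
To show strong \textsc{NP-hardness}, I plan to reduce from 3-\textsc{Partition}, which is classically strongly \textsc{NP-hard}. This is the natural source: the capacity constraint $\sum_n (Sp(P_n)/R^k_n)\, x_{n,\ell,k} \le S_\ell$ in the \textsc{OptAssign} ILP is structurally a bin-packing inequality, so if every other ingredient of the ILP is neutralized, what remains behaves like a multiple-bin-packing instance. 3-\textsc{Partition} is the right lever because it forces every bin to be saturated exactly, while keeping every numerical parameter polynomial in the input size, which is precisely what strong \textsc{NP-hardness} requires.

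\textbf{Construction.} Given a 3-\textsc{Partition} instance with integers $a_1,\dots,a_{3m}$, $\sum_i a_i = mB$ and $B/4 < a_i < B/2$, I build an \textsc{OptAssign} instance with $N = 3m$ newly ingested partitions, $Sp(P_i) = a_i$, zero predicted accesses $\rho(P_i) = 0$, and trivially loose latency $T(P_i) = \infty$. I use a single compression option ($K = 1$ with $R^1_i = 1$, $D^1_i = 0$) and $L = m+1$ tiers: the first $m$ tiers have unit storage cost and capacity $S_\ell = B$, while tier $m+1$ has capacity $\infty$ but a prohibitively large storage cost $M > mB$. All write, read, compute, and tier-change costs are set to zero, so the objective collapses to $\alpha \sum_i a_i\, C^s_{\ell(i)}$, and the decision version asks for an assignment of cost at most $\alpha \cdot mB$.

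\textbf{Correctness.} Any assignment that places even one partition in tier $m+1$ costs strictly more than $\alpha \cdot mB$, so a solution meeting the threshold must pack everything into tiers $1,\dots,m$. Since those capacities sum to exactly $mB = \sum_i a_i$, every used tier is saturated, and since each $a_i \in (B/4, B/2)$, a tier cannot hold two items (sum $< B$) or four items (sum $> B$); it must contain exactly three items summing to exactly $B$, i.e.\ a 3-partition. Conversely, any 3-partition yields a feasible assignment of cost exactly $\alpha \cdot mB$. Since every numeric parameter in the construction is polynomial in the 3-\textsc{Partition} encoding, the reduction is pseudo-polynomial, establishing strong \textsc{NP-hardness}.

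\textbf{Main obstacle.} The principal worry is that the remaining ILP features (multiple compression schemes, tier-change penalties, latency thresholds, and the constraint fixing compression for existing partitions) could accidentally create a shortcut around the bin-packing core. My plan is to neutralize each of them simultaneously by choosing trivial values (single compression option, all partitions newly ingested with $L(P_i) = -1$, zero access load, infinite latency, zero transition costs), so the extra structure genuinely vanishes rather than merely being weakly dominated. Containment in \textsc{NP} is routine: a candidate assignment $\{x_{n,\ell,k}\}$ can be checked against every linear constraint in polynomial time.
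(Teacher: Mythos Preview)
Your proposal is correct and follows essentially the same route as the paper: a reduction from 3-\textsc{Partition} with items mapped to data partitions and groups mapped to capacity-$B$ tiers. The only cosmetic difference is that the paper uses exactly $v$ tiers of capacity $B$ and reduces to bare feasibility (all costs zero), whereas you add an extra infinite-capacity ``overflow'' tier with prohibitive storage cost and reduce to the cost-threshold decision version; both variants encode the same bin-saturation argument.
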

\begin{proof}
This follows by a reduction from \textsc{3-Partition}.
Here we provide a proof sketch due to limited space. 
Consider an instance $\mathcal{I} =  \{a_i\}$ of \textsc{3-Partition} with $n=3v$ integers such that $\sum_{i\in [n]}{a_i} = B v$. The decision question is whether there exists 
a partition of $\mathcal{I}$ in to $v$ groups, such that each group 
sums to exactly $B$. Now, construct a relaxed instance of 
\textsc{OptAssign}, where cost parameters are $0$, $K=0$ (no compression algorithms) and latency thresholds are met by all tiers. We create a data partition of span $=a_i$ for each $a_i \in \mathcal{I}$. Let there be $v$ tiers of storage, with a capacity 
limit of $B$ per tier. It can be seen that a \textsc{Yes} instance in the \textsc{3-Partition} instance corresponds to a \textsc{Yes} instance in the \textsc{OptAssign} instance, and vice versa.
\end{proof}

\subsection{Polynomial Algorithms for Special Cases}
\subsubsection{Equal sized Partitions, No compression}
Consider the case where the data partitions are of equal spans (i.e., $Sp(P_i) = S \ \forall i\in [N]$ for some $S$),  
and there are no compression schemes ($K=0$)\footnote{We \textbf{do not require partitions to be equal sized} and can handle different compression algorithms.  The \textbf{datasets are of varying sizes in our experiments}. However, it is a possible scenario that can be enforced from the system administrator end, as a part of the data ingestion workflow, if the client requires this functionality or for facilitating data management. For example, one can configure file sizes to a certain set value by the command ` write.target-file-size-bytes’ as per https://iceberg.apache.org/docs/latest/configuration/), or specify the default parquet compression level to be null.}. Let all partitions be ingested at the same time, and no prior assignments exist. This is a possible scenario in practice when for a given storage account 
existing data are purged periodically, and new set of data are ingested. 
Since the span of each of the $N$ data partitions is $S$, the capacity of 
each storage tier can be expressed as a multiple of $S$, without loss of any generality. Also, as earlier, archive tier has capacity $\infty$. 
Since the data partitions are of equal sizes, we can consider the partitions to be of unit size. Now the capacity of 
each layer $\ell$ can be expressed as an integer $Z_\ell$, where $Z_\ell = \min\{N,\lfloor\frac{S_\ell}{S}\rfloor\}$, 
because there are at most $N$ partitions that need to be assigned. 

Now, let us construct a bipartite graph $\mathcal{G} = (\mathcal{U}, \mathcal{V}, \mathcal{E}, \mathcal{W})$. There would be $N$ nodes corresponding to the 
$N$ data partitions in $\mathcal{U}$. For each of the $L$ tiers (including archive), 
create $Z_\ell$ number of nodes 
in $\mathcal{V}$. Denote these as $Z_\ell$ copies of tier $\ell$.
The edge $e = (u,v)$ between a data segment and every copy of a tier 
would exist only if the latency threshold of the data segment would not be 
violated by assigning the segment to the tier.   
The weight of each such edge 
would be determined by the storage cost of the tier and the expected
read cost from that tier, as determined by the projected number of accesses of that data segment. 
Now, we solve a minimum weighted bipartite matching problem in this bipartite graph. 
Note that the size of the bipartite graph is polynomial since, there are at most 
$N + N\ L $ nodes. The minimum weighted matching would select the edges 
of minimum total weight, such that every node is assigned to at most one 
copy of at most one tier. There would be at most $Z_\ell$ assignments to any tier, 
since there are only $Z_\ell$ nodes corresponding to each tier that can be selected 
by the matching. The selected edges would not violate the latency thresholds, 
as the edges exist only if the threshold would be satisfied by the assignment. 
Therefore, the assignment found by minimum weight bipartite matching is not only feasible, in terms of latency requirement for assignments 
to regular tiers, but also optimal in terms of overall costs for the projected period. The edge weights can also 
be tuned based on the chosen hyperparameters and weightage of the various cost 
factors. The run time is polynomial in the input size: $O(N^2 L^2 E)$. 
Fig.\ref{fig:partitioning}(b) shows the above construction. 
\begin{theorem}
There exists a polynomial time optimal algorithm for the case of equal  sized data 
partitions and no compression. 
\end{theorem}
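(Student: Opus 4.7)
The plan is to reduce the special case to a minimum-weight bipartite matching instance and argue that the reduction is both faithful (feasibility is preserved) and cost-preserving (optimal matching gives optimal assignment). Since partitions are equal-sized, all capacities can be rescaled so that each tier $\ell$ has an integer capacity $Z_\ell = \min\{N, \lfloor S_\ell / S \rfloor\}$, and since there is no compression, the per-partition decision reduces to a single choice of destination tier.

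First, I would build the bipartite graph $\mathcal{G} = (\mathcal{U}, \mathcal{V}, \mathcal{E}, \mathcal{W})$ where $\mathcal{U}$ contains one node per partition $P_n$, and $\mathcal{V}$ contains $Z_\ell$ ``slot'' copies for each tier $\ell \in [L]$. I would include an edge between $P_n$ and any slot of tier $\ell$ only if $B_\ell \leq T(P_n)$, thus baking the latency constraints directly into the edge set. The weight of such an edge would be set to $\alpha\,C^s_\ell\,S + \gamma\,\Delta_{L(P_n),\ell}\,S + \beta\,\rho(P_n)\,C^r_\ell\,S$, matching (with $K = 0$) the per-partition objective contribution in the ILP \eqref{eq:ilp}.

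Next I would establish the two directions of the correspondence. For any feasible ILP solution $\{x_{n,\ell,k}\}$, construct a matching by assigning each $P_n$ to an arbitrary unused slot of the tier it was placed in; the capacity constraint $\sum_n x_{n,\ell,0} \leq Z_\ell$ guarantees enough slots exist, and the latency constraint guarantees the corresponding edge is present, so the matching is a valid perfect matching on $\mathcal{U}$ of identical total weight. Conversely, a minimum-weight matching saturating $\mathcal{U}$ induces an assignment satisfying both constraints (at most $Z_\ell$ partitions land on tier $\ell$, and only latency-respecting edges were available), again with the same total cost. Hence the optimum matching value equals the optimum ILP value.

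The main obstacle, I expect, will be the standard care needed when some partitions simply cannot fit: I would need to argue that the resulting matching is guaranteed to saturate $\mathcal{U}$ whenever the original ILP is feasible (which follows because $\sum_\ell Z_\ell \geq N$ in any feasible instance, owing to $Z_{L-1} = N$ when the archive tier has infinite capacity and meets the latency bound; otherwise the instance is infeasible and we declare so). After that, complexity is routine: the graph has $O(NL)$ vertices and $O(N^2 L)$ edges, so any standard minimum-weight bipartite matching algorithm (e.g., the Hungarian method) runs in time polynomial in $N$ and $L$, giving the claimed polynomial-time optimal algorithm.
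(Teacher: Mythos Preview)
Your proposal is correct and follows essentially the same approach as the paper: both reduce the equal-size, no-compression case to minimum-weight bipartite matching by creating $Z_\ell$ slot copies per tier, encoding latency feasibility in the edge set and per-partition cost in the edge weights, then invoking the Hungarian method. Your write-up is in fact slightly more careful than the paper's (you spell out the two-direction correspondence and the saturation/feasibility issue via the archive tier), but the underlying construction and argument are the same.
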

\begin{proof}
The proof follows from the above discussion. Specifically, the assignment found 
by the matching is feasible by construction, 
and the overall 
weight of the edges chosen is minimum by the optimality of the minimum 
weight bipartite matching algorithm. The time complexity follows from the that of Hungarian method.  
\end{proof}

\subsubsection{Unbounded Capacity}
Consider the general version of \textsc{OptAssign} (unequal sized partitions, multiple compression 
schemes) with the relaxation of the 
capacity constraints. Specifically, there are no capacity bounds on the tiers. This is a commonly occurring scenario in practice, including in our private enterprise Data Lake setting. 
In this case, a simple greedy algorithm gives the optimal solution. For every partition $P_i$, compute the set of feasible tuples $(\ell, k)$ of tier and compression algorithm,  and the choosing the lowest cost option per partition. This is feasible since there are no capacity restrictions and gives the optimal solution overall. The run time is $O(N\ L\ K)$, and for constant $L$ and $K$, this becomes linear in the number of partitions. 

\begin{theorem}
\label{thm:greedy}
There exists an optimal polynomial time algorithm for \textsc{OptAssign} 
when there are no capacity constraints. 
\end{theorem}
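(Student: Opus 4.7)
The plan is to show that once the capacity constraints are dropped, the ILP in equation~\eqref{eq:ilp} completely \emph{decouples} across partitions, so that the global optimum can be obtained by independently optimizing each partition. First I would examine the objective and the remaining constraints. Observe that the summation in the objective is a sum of terms each indexed by a single partition $n$, and the remaining constraints (the assignment equality $\sum_{\ell,k} x_{n,\ell,k}=1$, the latency constraint, the binary constraint, and the fixed-compression constraint for $n\in[I]$) each involve variables $x_{n,\ell,k}$ for only one value of $n$. Thus the feasible region is a Cartesian product over $n\in[N]$, and the objective separates additively over $n$.

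Next I would formalize the per-partition subproblem. For each partition $P_n$, let
\[
c_{n,\ell,k} = \bigl(\alpha C^s_\ell + \gamma \Delta_{L(P_n),\ell}\bigr)\frac{Sp(P_n)}{R^k_n} + \beta\rho(P_n)\Bigl(C^c D^k_n + C^r_\ell \frac{Sp(P_n)}{R^k_n}\Bigr),
\]
and let $\mathcal{F}_n$ denote the set of pairs $(\ell,k)\in[L]\times[K]$ such that $D^k_n + B_\ell \le T(P_n)$, and additionally $k = K(P_n)$ if $n\in[I]$. The subproblem for $P_n$ is simply to choose $(\ell^\star_n, k^\star_n) \in \arg\min_{(\ell,k)\in\mathcal{F}_n} c_{n,\ell,k}$.

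With this decomposition in hand, the proof writes itself: optimality follows because minimizing a sum of independent terms subject to independent constraints is achieved by minimizing each term over its own feasible set; feasibility holds iff each $\mathcal{F}_n$ is nonempty (and if some $\mathcal{F}_n=\emptyset$ the original ILP is infeasible too, so the greedy correctly reports infeasibility, matching the remark in the paper about iteratively relaxing latency). For the runtime, enumerating $\mathcal{F}_n$ and selecting its minimum takes $O(LK)$ operations per partition, giving $O(NLK)$ total; for new partitions all of $[L]\times[K]$ is considered, and for existing partitions the compression dimension collapses to a single value, which only speeds things up.

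The main (minor) obstacle will be stating the decomposition argument carefully, in particular noting that the constraint $x_{n,\ell,k}=0$ for $n\in[I], k\neq K(P_n)$ is still a per-partition restriction and so does not couple partitions; and noting that the absence of the capacity constraint $\sum_n (Sp(P_n)/R^k_n)x_{n,\ell,k}\le S_\ell$ is precisely what removes the only cross-partition coupling in~\eqref{eq:ilp}. Once this is articulated, optimality of the greedy choice and the $O(NLK)$ running time follow immediately, establishing the theorem.
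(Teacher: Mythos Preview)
Your proposal is correct and is essentially the same argument as the paper's own proof, only stated more carefully: the paper simply observes that greedy picks the lowest-cost feasible option per partition and argues by contradiction that no cheaper global assignment can exist, whereas you make the decoupling explicit by noting that removing the capacity constraints eliminates the only cross-partition coupling in~\eqref{eq:ilp}. Your treatment of the per-partition feasible set $\mathcal{F}_n$ (including the fixed-compression restriction for existing partitions) and the $O(NLK)$ runtime matches the paper's description of the greedy algorithm exactly.
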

\begin{proof}
Since the greedy algorithm evaluates the lowest cost option for every merge, the 
overall cost is the lowest. If the overall cost is not the lowest, there has to be at least one feasible assignment 
of lower cost, but the greedy algorithm would have considered at one of the options, and would have therefore selected it. 
\end{proof}

\subsection{Empirical Validation on Enterprise Data}
We applied \textsc{OptAssign} with $K=0$ on Enterprise Data I using datasets as the data partitions, and projected access patterns for $6$ months using historical access logs. These datasets are 
large, ranging from TB to PB. We observed significant cost reduction benefits over the platform baseline. Figure \ref{fig:cost_benefit_6month} shows the \% cost benefit vs size and number of accesses for files for $6$ month projections for one customer account. We show in Table \ref{tab:customer} the projected cost benefits for 4 different customer accounts. Our methods are scalable and computationally efficient (e.g., the optimization took 2.53s on 463 datasets of customer B from Table \ref{tab:customer}).

\begin{table}[htbp]
\caption{\% cost benefits for data across 4 customers.}
\centering
\begin{tabular}{|l|c|c|c|}
\hline
& \multirow{2}{*}{\textbf{Total Size (PB)}} & \multicolumn{2}{c|}{\textbf{\% Cost Benefit}} \\
& & 2 mos & 6 mos \\ \hline
Customer A & 0.56 & 10.59 & \textbf{61.6} \\ \hline
Customer B & 0.45 & 8 & \textbf{53.72} \\ \hline
Customer C & 0.053 & 11.58 & \textbf{83.69} \\ \hline
Customer D & 0.085 & 9.93 & \textbf{49.6} \\ \hline
\end{tabular}
\label{tab:customer}
\end{table}

\begin{figure}[htbp]
\centering
\begin{minipage}{0.49\linewidth}
    \includegraphics[width=\linewidth]{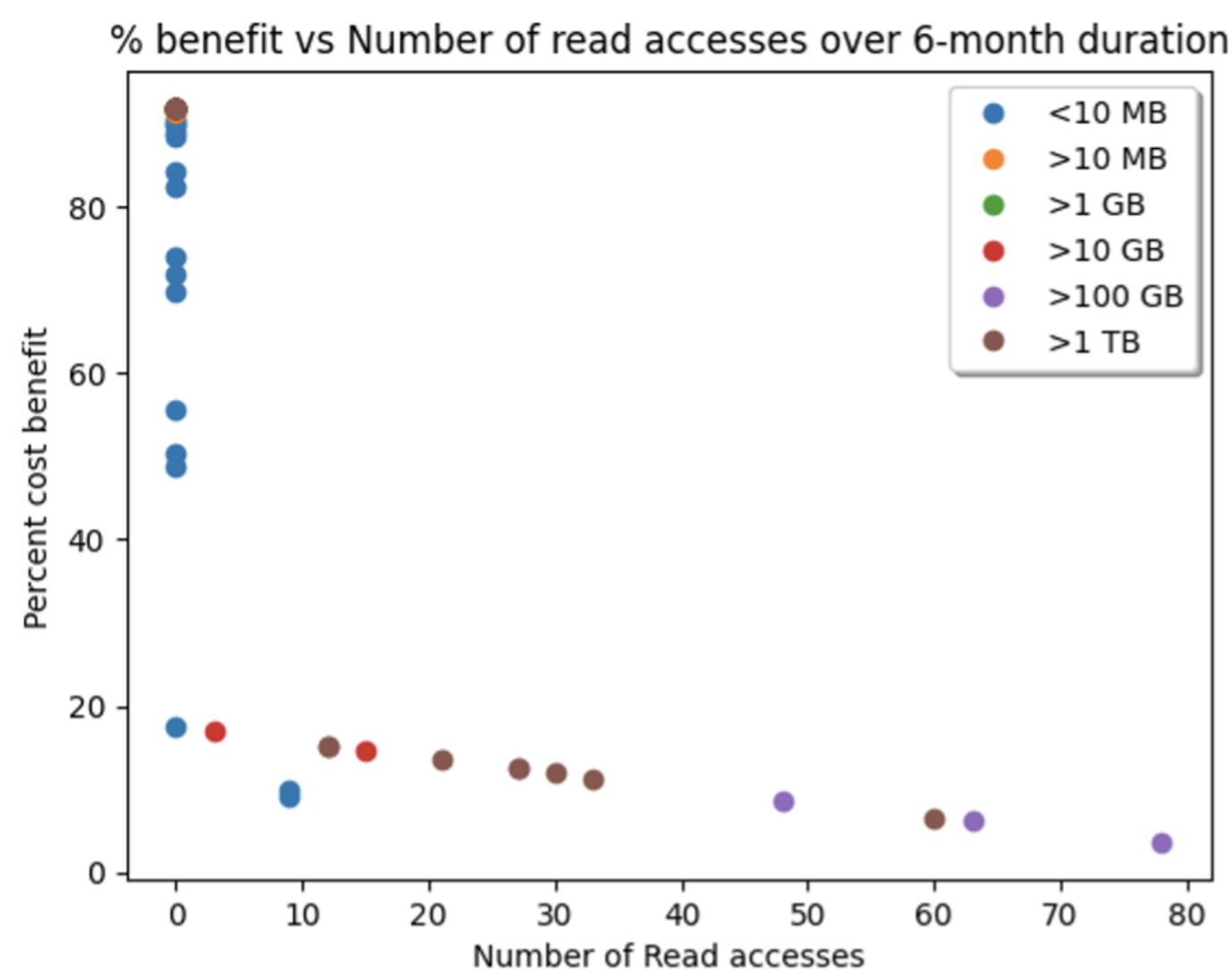}
    \subcaption{Cost benefit vs read accesses.}
\end{minipage}
\hfill
\begin{minipage}{0.49\linewidth}
    \includegraphics[width=\linewidth]{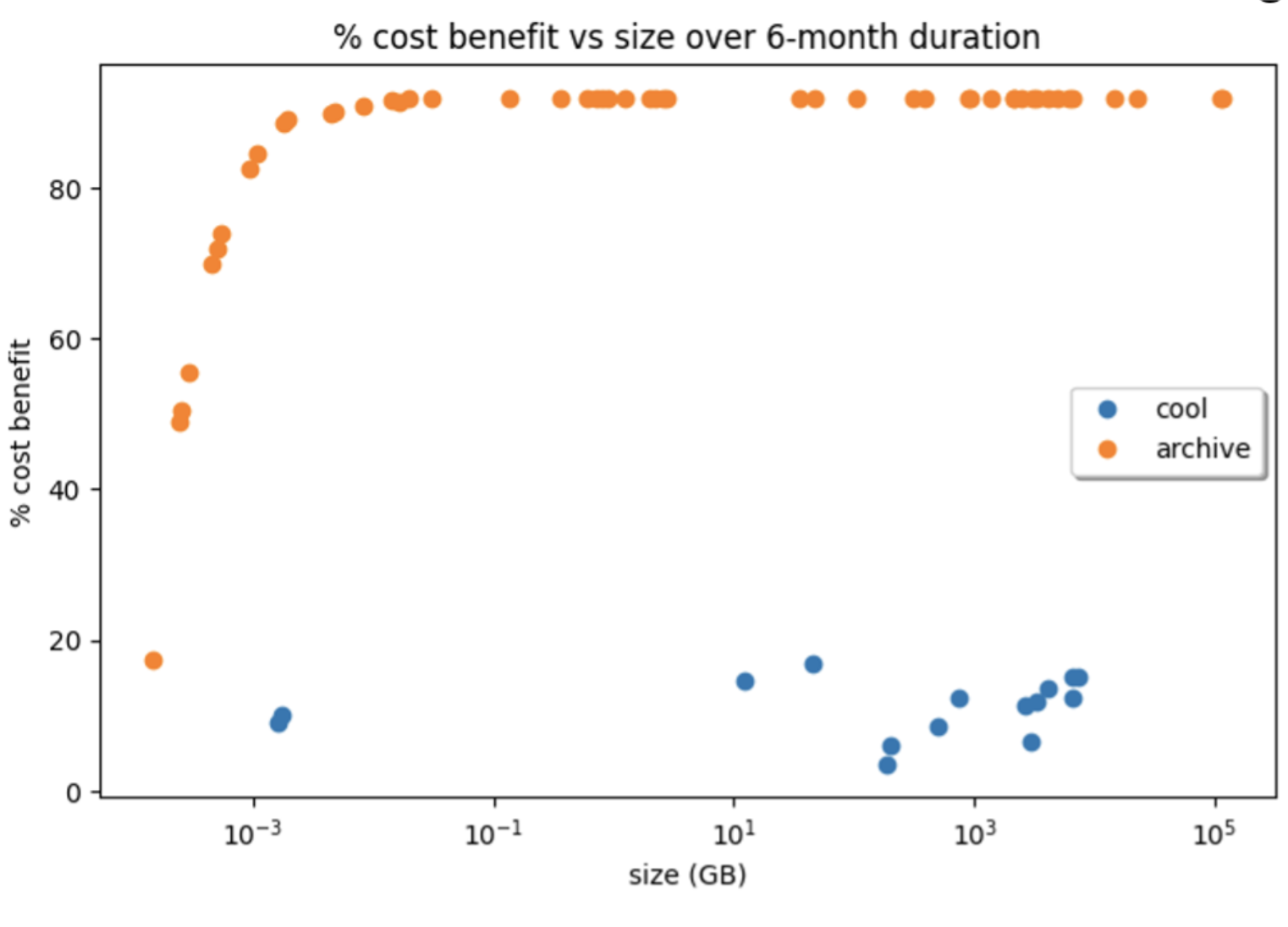}
    \subcaption{Cost benefit vs size.}
\end{minipage}    
\caption{Projected percent cost benefit for next 6 months (considering transfer from hot tier to cool+archive tiers)}
\label{fig:cost_benefit_6month}
\end{figure}

\textbf{Predicting Access Patterns and Quantifying Errors}
Predicting access patterns is a non-trivial problem. We have proposed a Random Forest model that is near optimal, with high precision and recall (F-1 score $>0.96$). Table \ref{tab:confusion_tier} shows the confusion matrix for one storage account. Random Forest showed the best overall performance compared to others like Gradient Boosted Trees and LSTMs. 
 \begin{table}[htbp]
\caption{Confusion matrices for predicted vs ideal tiering for one storage account (Around 700TBs of data in 760 datasets) over a 2 month prediction horizon.}
\label{tab:confusion_tier}
\centering
\begin{tabular}{cccc}
 &  & \multicolumn{2}{c}{\textbf{Ideal Tier}} \\ \cline{2-4} 
\multicolumn{1}{c|}{} & \multicolumn{1}{c|}{} & \multicolumn{1}{c|}{Hot} & \multicolumn{1}{c|}{Cool} \\ \cline{2-4} 
\multicolumn{1}{c|}{\multirow{2}{*}{\textbf{Predicted Tier}}} & \multicolumn{1}{c|}{Hot} & \multicolumn{1}{c|}{291} & \multicolumn{1}{c|}{12} \\ \cline{2-4} 
\multicolumn{1}{c|}{} & \multicolumn{1}{c|}{Cool} & \multicolumn{1}{c|}{12} & \multicolumn{1}{c|}{445} \\ \cline{2-4} 
\end{tabular}
\end{table}
We used \textsc{OptAssign} to assign the ground truth label encoding (i.e. the optimal tier) for each dataset while training the model. For prediction experiments, we ensured out-of-time validation and testing, and ran experiments on Apache Spark. The features used were (i) dataset size, (ii) months since dataset creation, and aggregated monthly (iii) read and (iv) write accesses for the last few months. The model needs to be retrained periodically for the next cycle (whose duration is adjustable) to account for changes in access pattern distributions. The cost for this batch job and the compute cost of applying the tiering operation is negligible compared to the tier change costs of the public cloud storage provider. The \% benefit shown is computed after deducting this cost, which shows that our model is practically feasible. Table \ref{tab:accesspatterns} shows how our results are consistent over multiple prediction horizons as well as how we compare to other tiering baselines, including caching based ones. Note that even after making errors, the \% benefit is close to the ideal case where all access information is known beforehand. Moreover, making significant mistakes is unlikely given the typical skewed or seasonal enterprise query workloads which are predictable enough by ML models to determine the optimal tier. Also note that the benefit is higher when we look at longer prediction horizons since lesser number of tier changes are required, as expected. Bringing in the archive layer helps increase the \% as well. Regarding performance considerations - while expected latencies are bounded as required by SLAs, there can be occasional unexpected accesses, causing tail latencies to be longer. A comparison with additional nontrivial baselines that also consider partitioning, latency, and compression along with multi-tiering is given at the end of the paper. 
Caching inspired baselines in rows 2 and 3 perform poorly because of two main reasons. i) Firstly, recency of access does not guarantee access in the next projected period, and one would need to train an ML model for that. ii) Secondly, even if access prediction is $100\%$ correct and a dataset is certainly going to be accessed in the next billing period, the optimal tier (from cost perspective, subject to performance considerations) might still not be hot, given the dataset size, the number and type of accesses, the amount of data to be accessed, tier change costs (if applicable), cloud cost parameters, as well as SLA and availability agreements with clients, which vanilla caching rules do not consider.
\begin{table}[htbp]
\caption{Comparison of OptAssign (with predicted or known access information) with intuitive baselines for the same storage account in Table \ref{tab:confusion_tier}.}
\label{tab:accesspatterns}
\resizebox{\linewidth}{!}{%
\begin{tabular}{|c|c|c|c|}
\hline
\multirow{2}{*}{\textbf{Model}} & \textbf{Access} & \textbf{Duration} & \multirow{2}{*}{\textbf{Benefit}} \\
 & \textbf{Information} & (\textbf{months}) &  \\ \hline
All hot & N/A & 2 & 0\% \\ \hline
``Hot" if data accessed in last 2 mos & N/A & 4 & 2.67\% \\ \hline
``Hot" if data accessed in last 1 mo & N/A & 4 & 3.25\% \\ \hline
Use optimal tier of prev. month & N/A & 2 & 5.07\% \\ \hline
OptAssign (Hot, Cool) & Predicted & 2 & 9.570\% \\ \hline
OptAssign (Hot, Cool) & Predicted & 4 & 13.58\% \\ \hline
OptAssign (Hot, Cool) & Known & 2 & 9.574\% \\ \hline
OptAssign (Hot, Cool) & Known & 4 & 13.62\% \\ \hline
OptAssign (Hot, Cool) & Known & 6 & 15.39\% \\ \hline
OptAssign (Hot, Cool, Archive) & Known & 6 & 43.8\% \\ \hline
\end{tabular}}
\end{table}

\section{\textsc{ComPredict}: Compression Predictor}
\label{sec:compression} 
We present \textsc{ComPredict}, which estimates compression ratios and decompression speeds for data partitions on the fly.
This involves training a model that is a one-time task, assuming the distribution of data types and other features remain largely unchanged\footnote{This has to be repeated at periodic intervals to handle slow changing data type distributions}. 
The model is trained to predict for a few popular compression schemes: gzip, snappy, and lz4 and two different data storage layouts (row-store and column-store). We found that our method also works well on other compression schemes like bz2, zlib, lzma, lzo, and quicklz, however we have omitted those results due to lack of space.


\textbf{Data Sources and Features:} Intuitively, compression performance can depend on various factors, such as, choice of compression scheme, data storage layout (row ordering vs column ordering), size of datasets, and characteristics of the data, e.g. data types, repetition in the data, entropy in the data,  organization of data contents (sorted vs unsorted) among others. 
Existing approaches for predicting comrpession performance  often use random samples from the dataset and simple features based on size or datatype. From Fig. \ref{fig:size-entropy} we can see that a sample formed from randomly sampled rows is typically not a good representation of the data that is usually queried from tabular datasets. We propose that this is because queried data typically has more repetition, which results in higher compression ratios compared to random samples. Note that if samples are generated in query aware pattern, skew in query workload can also have an effect. Considering only features like dataset size, datatype, or assuming a fixed data distribution like in prior art\cite{devarajan2020hcompress} is not enough to capture such notions. 
We created `weighted entropy' features for each partition $P$, with one feature for each data type present in $P$:\\
    $H(P, d) = -\sum_{s \in P[:,\;d]} len(s)\times pr(s)\times\log(pr(s))$, $d\in D$
    
Here $D$ denotes the set of datatypes of columns present in the partition $P$ (e.g. int, float, object, etc). For all strings $s$ that occur within the columns of a particular datatype $d$, we compute the probability of occurrence $pr(s)$ and length $len(s)$ of each string. $H(P, d)$ gives us an approximate representation of the amount of repetition in the table with datatype $d$. Computing these features requires a one-time full scan of each partition. 
\begin{figure}[htbp]
    \centering
    \includegraphics[width=\linewidth]{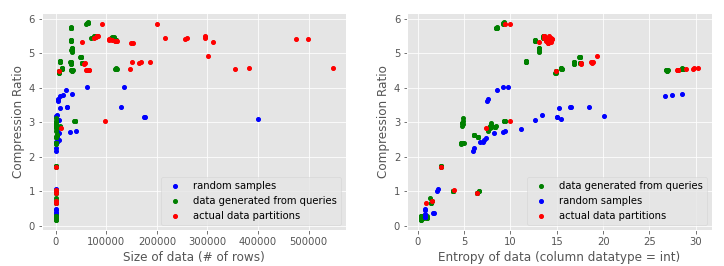}
    \caption{Compression Ratio vs Size (left) and Compression Ratio vs Entropy (right) on TPC-H dataset using gzip.}
    \label{fig:size-entropy}
\end{figure}
The samples used to train the model are derived from results of queries run on partitions. The number of samples required depends on the nature of the query workload.
Computing the samples and features took around 2-3 hours. Training the model takes a few seconds, and inference is almost instantaneous.
Table \ref{tab:comp-feature-comparison} shows a comparison of data samples (random vs query based) and features (size vs weighted entropy) for prediction on gzip, using Random Forest model. From Fig. \ref{fig:size-entropy} and Table \ref{tab:comp-feature-comparison} we can conclude that query based sampling using weighted entropy features are more effective for prediction. 
\begin{table}[htp]
\caption{Compression Ratio \& Decompression Speed Prediction by Random Forest model for Various Training Data and Features (GZIP Compression on TPC-H 1GB)}
\resizebox{\linewidth}{!}{%
\begin{tabular}{|c|c|c|c|c|c|}
\hline
 & \textbf{Training Data} & \textbf{Features} & \textbf{MAE} & \textbf{MAPE} & \textbf{R2} \\ \hline
\multirow{3}{*}{\textbf{Compression Ratio}} & Random Samples & Weighted Entropy & 1.022 & 72.188 & -0.656 \\ \cline{2-6} 
 & Queries & Size & 0.049 & 3.013 & 0.995 \\ \cline{2-6} 
 & Queries & Weighted Entropy & \textbf{0.021} & \textbf{0.527} & \textbf{0.988} \\ \hline\hline
\multirow{3}{*}{\textbf{Decompression Speed}} & Random Samples & Weighted Entropy & 18.713 & 268.627 & 0.069 \\ \cline{2-6} 
 & Queries & Size & 2.398 & 5.555 & 0.792 \\ \cline{2-6} 
 & Queries & Weighted Entropy & \textbf{0.254} & \textbf{1.215} & \textbf{0.989} \\ \hline
\end{tabular}}
\label{tab:comp-feature-comparison}
\end{table}

\begin{table*}[htp]
\centering
\caption{Compression Ratio Prediction for Various Models, Compression Schemes, \& Data Layouts (TPC-H 1GB)}
\resizebox{0.9\textwidth}{!}{%
\begin{tabular}{|c|ccc|ccc|ccc|ccc|ccc|}
\hline
\multirow{2}{*}{\textbf{Model}} & \multicolumn{3}{c|}{\textbf{gzip}} & \multicolumn{3}{c|}{\textbf{snappy}} & \multicolumn{3}{c|}{\textbf{parquet + gzip}} & \multicolumn{3}{c|}{\textbf{parquet + snappy}} & \multicolumn{3}{c|}{\textbf{parquet + lz4}} \\
 & \textbf{MAE} & \textbf{MAPE} & \textbf{R2} & \textbf{MAE} & \textbf{MAPE} & \textbf{R2} & \textbf{MAE} & \textbf{MAPE} & \textbf{R2} & \textbf{MAE} & \textbf{MAPE} & \textbf{R2} & \textbf{MAE} & \textbf{MAPE} & \textbf{R2} \\ \hline
\textbf{Averaging} & \multicolumn{1}{c|}{0.215} & \multicolumn{1}{c|}{5.353} & - & \multicolumn{1}{c|}{0.074} & \multicolumn{1}{c|}{3.315} & - & \multicolumn{1}{c|}{0.781} & \multicolumn{1}{c|}{23.154} & - & \multicolumn{1}{c|}{0.531} & \multicolumn{1}{c|}{20.101} & - & \multicolumn{1}{c|}{0.483} & \multicolumn{1}{c|}{19.494} & - \\ \hline
\textbf{XGBoost} & \multicolumn{1}{c|}{0.033} & \multicolumn{1}{c|}{0.851} & 0.991 & \multicolumn{1}{c|}{0.017} & \multicolumn{1}{c|}{0.733} & 0.991 & \multicolumn{1}{c|}{0.057} & \multicolumn{1}{c|}{1.482} & 0.989 & \multicolumn{1}{c|}{0.040} & \multicolumn{1}{c|}{1.305} & 0.988 & \multicolumn{1}{c|}{0.036} & \multicolumn{1}{c|}{1.206} & 0.992 \\ \hline
\textbf{Neural Network} & \multicolumn{1}{c|}{0.030} & \multicolumn{1}{c|}{0.793} & 0.993 & \multicolumn{1}{c|}{0.02} & \multicolumn{1}{c|}{0.930} & 0.985 & \multicolumn{1}{c|}{0.062} & \multicolumn{1}{c|}{1.549} & 0.991 & \multicolumn{1}{c|}{0.049} & \multicolumn{1}{c|}{1.730} & 0.992 & \multicolumn{1}{l|}{0.047} & \multicolumn{1}{l|}{1.747} & \multicolumn{1}{l|}{0.990} \\ \hline
\textbf{SVR} & \multicolumn{1}{c|}{0.071} & \multicolumn{1}{c|}{1.920} & 0.977 & \multicolumn{1}{c|}{0.069} & \multicolumn{1}{c|}{3.049} & 0.885 & \multicolumn{1}{c|}{0.089} & \multicolumn{1}{c|}{2.633} & 0.991 & \multicolumn{1}{c|}{0.089} & \multicolumn{1}{c|}{3.477} & 0.984 & \multicolumn{1}{c|}{0.091} & \multicolumn{1}{c|}{3.632} & 0.983 \\ \hline
\textbf{Random Forest} & \multicolumn{1}{c|}{\textbf{0.021}} & \multicolumn{1}{c|}{\textbf{0.527}} & \textbf{0.988} & \multicolumn{1}{c|}{\textbf{0.011}} & \multicolumn{1}{c|}{\textbf{0.453}} & \textbf{0.989} & \multicolumn{1}{c|}{\textbf{0.043}} & \multicolumn{1}{c|}{\textbf{0.996}} & \textbf{0.983} & \multicolumn{1}{c|}{\textbf{0.029}} & \multicolumn{1}{c|}{\textbf{0.948}} & \textbf{0.985} & \multicolumn{1}{c|}{\textbf{0.026}} & \multicolumn{1}{c|}{\textbf{0.901}} & \textbf{0.989} \\ \hline
\end{tabular}}
\label{tab:cr-model-comparison-1gb}
\end{table*}

\textbf{Row vs Column Oriented Storage:} Data can be stored in a row oriented fashion, with consecutive row entries stored adjacently, or in a column oriented fashion, with consecutive column entries stored adjacently. It is important to consider how this nuance effects the dynamics for compression ratio predictions. In our experiments, we used CSV files as an example of row storage and Parquet files (common in enterprise data lakes) for columnar storage. Overall, the prediction performance was good in both cases, though the results are slightly better for row storage.

\textbf{Models and Datasets:} We trained several statistical models (XGBoost, Random Forest, SVR) and a Neural Network (MLP) with the features as input to predict compression ratios and decompression speeds. Apart from the naive model of simply averaging, these models performed well and are comparable, while Random Forest performs the best. Tables \ref{tab:cr-model-comparison-1gb},
\ref{tab:cr-model-comparison} and
\ref{tab:ds-model-comparison},
show the results on TPC-H 1GB, TPC-H 100GB and TPC-H 1GB with Zipfian skew. 

\textbf{Sorting Data:}
We briefly investigated how the prediction varies if the data is sorted by different columns. The difference in compression ratios between data sorted by different columns is generally small (of the order of our prediction error). 
We proposed `bucketed weighted entropy' features for capturing the effect of sorting on the entropy of columns. Specifically, the bucketed entropy would be computed for each successive 20\% of rows. Our hypothesis was that for column-store data, there would be a greater change in the compression ratios because entries of a column are stored together, and thus the model should work better for parquet compared to csv files. Empirically, however we observed that prediction performance using the new entropy features was similar to using the older features. We leave further exploration on this as future work.
\begin{table}[htbp]
\caption{Compression Ratio Prediction for Various Models, Compression Schemes, and Data Layouts}
\centering
\resizebox{0.9\linewidth}{!}{%
\begin{tabular}{|c|ccc|ccc|}
\hline \multirow{2}{*}{\textbf{Model}} & \multicolumn{3}{c|}{\textbf{gzip}} & \multicolumn{3}{c|}{\textbf{parquet + gzip}} \\
 & \textbf{MAE} & \textbf{MAPE} & \textbf{R2} & \textbf{MAE} & \textbf{MAPE} & \textbf{R2} \\ \hline\hline
 \multicolumn{7}{|c|}{\textbf{TPC-H 100GB}} \\
 \hline\hline
\textbf{Averaging} & \multicolumn{1}{c|}{0.083} & \multicolumn{1}{c|}{2.378} & - & \multicolumn{1}{c|}{0.324} & \multicolumn{1}{c|}{8.795} & - \\ \hline
\textbf{XGBoost} & \multicolumn{1}{c|}{0.105} & \multicolumn{1}{c|}{2.838} & 0.936 & \multicolumn{1}{c|}{0.151} & \multicolumn{1}{c|}{3.751} & 0.943 \\ \hline
\textbf{Neural Network} & \multicolumn{1}{c|}{0.081} & \multicolumn{1}{c|}{2.232} & 0.968 & \multicolumn{1}{c|}{0.147} & \multicolumn{1}{c|}{3.535} & 0.962 \\ \hline
\textbf{SVR} & \multicolumn{1}{c|}{0.105} & \multicolumn{1}{c|}{3.077} & 0.948 & \multicolumn{1}{c|}{0.19} & \multicolumn{1}{c|}{4.765} & 0.914 \\ \hline
\textbf{Random Forest} & \multicolumn{1}{c|}{\textbf{0.078}} & \multicolumn{1}{c|}{\textbf{2.151}} & \textbf{0.969} & \multicolumn{1}{c|}{\textbf{0.134}} & \multicolumn{1}{c|}{\textbf{3.369}} & \textbf{0.966} \\ \hline\hline
 \multicolumn{7}{|c|}{\textbf{TPC-H Skew}} \\
 \hline\hline
 \textbf{Averaging} & \multicolumn{1}{c|}{0.120} & \multicolumn{1}{c|}{4.915} & - & \multicolumn{1}{c|}{0.601} & \multicolumn{1}{c|}{32.491} & - \\ \hline
\textbf{Neural Network} & \multicolumn{1}{c|}{0.125} & \multicolumn{1}{c|}{3.868} & 0.975 & \multicolumn{1}{c|}{0.336} & \multicolumn{1}{c|}{15.953} & 0.847 \\ \hline
\textbf{SVR} & \multicolumn{1}{c|}{0.101} & \multicolumn{1}{c|}{4.280} & 0.992 & \multicolumn{1}{c|}{0.163} & \multicolumn{1}{c|}{8.526} & 0.969 \\ \hline
\textbf{Random Forest} & \multicolumn{1}{c|}{0.093} & \multicolumn{1}{c|}{3.005} & 0.988 & \multicolumn{1}{c|}{0.251} & \multicolumn{1}{c|}{12.127} & 0.894 \\ \hline
\textbf{XGBoost} & \multicolumn{1}{c|}{\textbf{0.066}} & \multicolumn{1}{c|}{\textbf{2.467}} & \textbf{0.992} & \multicolumn{1}{c|}{\textbf{2.009}} & \multicolumn{1}{c|}{\textbf{6.145}} & \textbf{0.897} \\ \hline
\end{tabular}}
\label{tab:cr-model-comparison}
\end{table}

\textbf{Effect of \textsc{ComPredict} on \textsc{OptAssign}:}
We compare the effect of the predictions 
of compression ratios and decompression times on \textsc{OptAssign}. 
We compute the storage cost, read + compute cost, and latency time of this placement using ground truth compression values as the baseline. The optimization is computed for a range of many different values of $\alpha$ and $\beta$ for comparing  the cost-vs-latency tradeoffs for the predictors (Fig. \ref{fig:compression-optim-tradeoff}). Here we show \textsc{OptAssign} using prediction from SVR on queried samples using weighted entropy features performs very close to ground truth compression for the TPC-H 1GB dataset, not leaving much room for improvement. The magnitude of errors made by our compression predictor is low as seen from the tables shown. In fact, the impact of these errors on the final cost is also minimal since the purple (ground truth compression) and green (our predictor) curves in Fig. \ref{fig:compression-optim-tradeoff} are almost the same. This means both would result in similar latency and storage cost across different tier assignments, unlike other baselines (shown in red and blue).
\begin{figure}[htbp]
    \centering
    \includegraphics[width=\linewidth]{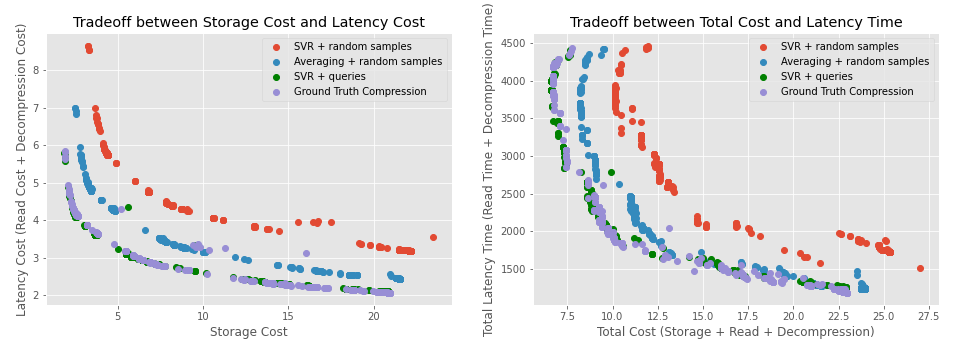}
    \caption{Left: Latency Cost vs Storage Cost, Right: Total Cost vs Latency Time. Different tradeoff curves correspond to different compression predictors used.}
    \label{fig:compression-optim-tradeoff}
\end{figure}
\begin{table}[htbp]
\caption{Decompression (sec/GB) Prediction for Models, Compression Schemes, Data Layouts}
\centering
\resizebox{0.9\linewidth}{!}{%
\begin{tabular}{|c|ccc|ccc|}
\hline
\multirow{2}{*}{\textbf{Model}} & \multicolumn{3}{c|}{\textbf{gzip}} & \multicolumn{3}{c|}{\textbf{parquet + gzip}} \\
 & \textbf{MAE} & \textbf{MAPE} & \textbf{R2} & \textbf{MAE} & \textbf{MAPE} & \textbf{R2} \\ \hline\hline
\multicolumn{7}{|c|}{\textbf{TPC-H 100GB}}\\\hline\hline
\textbf{Averaging} & \multicolumn{1}{c|}{0.679} & \multicolumn{1}{c|}{3.732} & - & \multicolumn{1}{c|}{5.672} & \multicolumn{1}{c|}{43.472} & - \\ \hline
\textbf{XGBoost} & \multicolumn{1}{c|}{0.322} & \multicolumn{1}{c|}{1.773} & 0.972 & \multicolumn{1}{c|}{1.606} & \multicolumn{1}{c|}{10.168} & 0.75 \\ \hline
\textbf{Neural Network} & \multicolumn{1}{c|}{0.147} & \multicolumn{1}{c|}{3.535} & 0.962 & \multicolumn{1}{c|}{1.86} & \multicolumn{1}{c|}{10.875} & 0.522 \\ \hline
\textbf{SVR} & \multicolumn{1}{c|}{0.399} & \multicolumn{1}{c|}{2.153} & 0.961 & \multicolumn{1}{c|}{1.147} & \multicolumn{1}{c|}{10.152} & 0.949 \\ \hline
\textbf{Random Forest} & \multicolumn{1}{c|}{\textbf{0.292}} & \multicolumn{1}{c|}{\textbf{1.601}} & \textbf{0.98} & \multicolumn{1}{c|}{\textbf{1.165}} & \multicolumn{1}{c|}{\textbf{9.698}} & \textbf{0.799} \\ \hline\hline
\multicolumn{7}{|c|}{\textbf{TPC-H Skew}}\\\hline\hline
\textbf{Averaging} & \multicolumn{1}{c|}{7.037} & \multicolumn{1}{c|}{29.979} & - & \multicolumn{1}{c|}{30.134} & \multicolumn{1}{c|}{125.23} & - \\ \hline
\textbf{MLP} & \multicolumn{1}{c|}{1.862} & \multicolumn{1}{c|}{5.860} & 0.917 & \multicolumn{1}{c|}{9.380} & \multicolumn{1}{c|}{21.526} & 0.880 \\ \hline
\textbf{SVR} & \multicolumn{1}{c|}{3.431} & \multicolumn{1}{c|}{15.568} & 0.847 & \multicolumn{1}{c|}{7.020} & \multicolumn{1}{c|}{19.508} & 0.955 \\ \hline
\textbf{XGBoost} & \multicolumn{1}{l|}{2.009} & \multicolumn{1}{l|}{6.145} & \multicolumn{1}{l|}{0.897} & \multicolumn{1}{c|}{6.330} & \multicolumn{1}{c|}{12.284} & 0.948 \\ \hline
\textbf{Random Forest} & \multicolumn{1}{c|}{\textbf{1.141}} & \multicolumn{1}{c|}{\textbf{4.910}} & \textbf{0.922} & \multicolumn{1}{c|}{\textbf{5.194}} & \multicolumn{1}{c|}{\textbf{7.983}} & \textbf{0.915} \\ \hline
\end{tabular}}
\label{tab:ds-model-comparison}
\end{table}
\section{\textsc{DataPart}: Access Aware Data Partitioning}
\label{sec:partition}
Data partitioning in an access pattern aware manner is important for skewed workloads where different parts of a dataset are accessed with widely different frequencies.  

\textsc{DataPart} considers the (minimal) set of records that need to be scanned by a query in an attribute agnostic manner. It then merges these sets of records to generate the data partitions, such that the total scans (read cost) incurred by queries is \textbf{within a limit}, while the overall space required by such partitions is \textbf{minimized} (by reducing overlapping content across partitions). We want to generate balanced size partitions and not fragment the data too much. The decision to make the partitioning attribute agnostic was driven by enterprise data privacy regulations. Fragmenting users’ data across multiple partitions is also undesirable, since that would require that many additional scans of files, hence increasing compute costs (COGS). Informally, our goal is to partition the datasets such that all the files that are generally accessed together belong to the same partition. The files or the contiguous blocks of records need not be adjacent in general (Fig. \ref{fig:partitioning}(a)). 

\subsection{Problem Definition}
Define a query family to comprise of all queries that map to the same files in the data tables.  
Consider a query family $Q$ accessing the following files from $D$: $\{R_1, R_2, \ldots, R_u\}$. This set constitutes an initial (naive) partition of the dataset $P_{Q} = \{R_1, R_2, \ldots, R_u\}$. Let there be $N$ such initial partitions: $\mathcal{P}$, generated from historical access logs. \textsc{DataPart} would generate the final partitions by merging (some of) these initial partitions in order to optimize certain metrics.
We define the span of a partition $P_i \in \mathcal{P}$ as: 
$Sp(P_i):= \sum_{R_k\in P_i}{|R_k|}$, where $|R_k|$ denotes the number of rows or records in the file $R_k$. 
The overlap between two partitions $P_i$ and $P_j$ is the length of files (or, number of records) in common to both partitions, and is computed as 
$Ov(P_i, P_j) = Sp(P_i)+ Sp(P_j) - Sp(P_i\cup P_j)$. 
Span of a merge of partitions $P_i$ and $P_j$, $Sp(P_i\cup P_j) \leq Sp(P_i) + Sp(P_j)$ due to potential overlap between $P_i$ and $P_j$. Formally, a merge $\mathcal{M}$ refers to the union of a set of partitions $\{P_i, P_j, \ldots, P_k\}$ with a span 
$Sp(\mathcal{M}) = Sp(\bigcup_{P_\ell \in \mathcal{M}}P_\ell)$. 

Each partition $P_i$ has an associated access frequency $\rho(P_i)$. The access frequency of a merge is simply the sum of the accesses of the constituent partitions. We require the access frequencies  
of partitions merged should be comparable. Hence we define a feasible merge $\mathcal{M}_k$ as: any pair of partitions $(P_i,\ P_j) \in \mathcal{M}_k$ satisfy at least one of following conditions: (i) $\frac{1}{\rho_{c}} \leq \frac{\rho(P_i)}{\rho(P_j)} \leq \rho_{c}$, or, (ii) $|\rho(P_i) - \rho(P_j)| \leq \rho'_c$, for constants $\rho_c$ and $\rho'_c$. 


Let the `cost' of a merge, measured as the expected read cost, depending on the size and expected number of accesses be defined as: 
$C(\mathcal{M}_k) = Sp(\mathcal{M}_k) \rho(\mathcal{M}_k)$.

The goal is to choose a set of merges $Z$ such that each (initial) partition is part of at least one merge, the total cost of $Z$ is bounded, and the total space required by $Z$ is minimized. 
We formulate this \textsc{Merge Partitions} mathematically as an ILP as follows. Let $\mathcal{M}$ denote 
the set of feasible merges (as defined earlier). In order to ensure that there is 
always a feasible solution, we allow individual partitions also as feasible choices for 
merges. Let $y_k$ be an indicator variable that is $1$ if merge $\mathcal{M}_k$ is chosen in the solution, and $0$ otherwise. 
The first inequality ensures that the (expected) total read cost of all the merges is at most $C_{thresh}$.
Let $x_{\ell, k}$ be another indicator variable that is $1$ if initial partition (or, vertex) $P_\ell$ is covered by merge 
$\mathcal{M}_k$ in the solution, and $0$ otherwise. In other words, this is $1$ when 
$P_\ell \in \mathcal{M}_k$ and $\mathcal{M}_k$ is chosen in the solution, and $0$ otherwise (if a merge $\mathcal{M}_k$ is not part of the solution, 
$x_{\ell, k}$ must be $0$, and this is ensured by the second  inequality). Every partition $P_\ell$ must 
be covered by at least one merge chosen in the solution, and this is ensured by the third inequality. 
\begin{align}
\label{eq:nonlinear}
\nonumber
    \min \sum_{k \in [1, \ldots, |\mathcal{M}|]}{Sp(\mathcal{M}_k) \ y_k} \\\nonumber
   \text{s.t. }\sum_{k \in [1, \ldots, |\mathcal{M}|]}{Sp(\mathcal{M}_k) \rho(\mathcal{M}_k)\  y_k} &\leq C_{thresh} \\\nonumber
     x_{\ell, k} \leq y_k \ \forall P_\ell \in \mathcal{M}_k, \ \forall \mathcal{M}_k \in \mathcal{M}\\\nonumber
     \sum_{\mathcal{M}_j \in \mathcal{M}| P_\ell \in \mathcal{M}_j}{x_{\ell, j}} \geq 1 \ \forall P_\ell \in \mathcal{P} \\\nonumber
     y_k \in {0, 1} \ \forall M_k \in \mathcal{M}, \\
     x_{\ell, k} \in {0, 1} \ \forall P_\ell \in \mathcal{P}
\end{align}

The ILP finds the set 
of merges to minimize the overall space, while covering all segments, keeping cost of merges bounded by $C_{thresh}$. 

\begin{theorem}
\label{thm:np-hard}
\textsc{Merge Partitions} is \textsc{NP-hard}.
\end{theorem}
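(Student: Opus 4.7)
The plan is to reduce from \textsc{3-Partition}, which is strongly \textsc{NP-hard}, mirroring the strategy used for Theorem~\ref{thm:opt-np-hard}. Given $\{a_1, \ldots, a_{3v}\}$ with $a_i \in (B/4, B/2)$ and $\sum_i a_i = vB$, I would build $3v$ initial partitions $P_1, \ldots, P_{3v}$, where each $P_i$ holds a unique file $f_i$ of size $a_i$ plus a common ``anchor'' file $F$ of size $X := vB^2 + 1$, and set $\rho(P_i) = a_i$. Because $a_i \in (B/4, B/2)$, every ratio $\rho(P_i)/\rho(P_j)$ lies in $(1/2, 2)$, so with $\rho_c \geq 2$ every subset of partitions is a feasible merge and feasibility imposes no real restriction on the reduction. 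For a merge $M$ with $s(M) = \sum_{P_i \in M} a_i$, we have $Sp(M) = X + s(M)$ and $\rho(M) = s(M)$, so its cost is $(X + s(M))\,s(M)$.

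For any candidate solution consisting of $t$ merges with sums $s_1, \ldots, s_t$ in which $P_i$ participates in $m_i \geq 1$ merges, the total span is $Xt + \sum_i a_i m_i$ and the total cost is $X\sum_i a_i m_i + \sum_k s_k^2$. The decision question I would pose is whether a solution exists with span $\leq W := vX + vB$ and cost $\leq C_{thresh} := vB(X+B)$. Correctness then rests on three chained inequalities. Choosing $X > vB^2$ makes $\sum_i a_i m_i \leq vB + vB^2/X < vB + 1$, and since this sum is an integer $\geq vB$, we must have $m_i = 1$ for all $i$, i.e.\ each partition sits in exactly one merge. The span bound then collapses to $Xt \leq vX$, giving $t \leq v$, while the cost bound reduces to $\sum_k s_k^2 \leq vB^2$; combined with $\sum_k s_k = vB$, Cauchy--Schwarz yields $t \geq v$. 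Hence $t = v$, and the equality case forces $s_k = B$ for every $k$---exactly a \textsc{3-Partition}. The reverse direction is immediate: a valid \textsc{3-Partition} produces $v$ merges each of span $X + B$ and cost $(X+B)B$, meeting both bounds tightly.

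The delicate step, and what I expect to be the main obstacle, is calibrating the anchor size $X$: it must be large enough (here $X = vB^2 + 1$) so that any double-covering of a partition immediately violates the cost constraint, yet polynomial in the unary-encoded input size of \textsc{3-Partition} so that \emph{strong} \textsc{NP-hardness} is preserved. Once $X$ is pinned down, the feasibility of merges under the access-frequency constraint, the integrality step forcing $m_i = 1$, and the Cauchy--Schwarz argument pinning $t = v$ are all routine.
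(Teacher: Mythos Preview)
Your reduction is correct and takes a genuinely different route from the paper. The paper reduces from the dataset-versioning problem of Huang et al.\ (\textsc{Minimize Checkout Cost}): versions become initial partitions, records become files, access frequencies are set equal, and the decision versions of the two problems coincide directly, so the hardness of Huang et al.'s problem transfers in one line. Your argument instead reduces from \textsc{3-Partition}, engineering overlap via a shared anchor file and using the access frequencies $\rho(P_i)=a_i$ so that the cost of a merge becomes $(X+s)s$; the anchor size $X=vB^2+1$ then forces $m_i=1$ from the cost bound, the span bound pins $t\le v$, and Cauchy--Schwarz on $\sum s_k^2\le vB^2$ with $\sum s_k=vB$ pins $t\ge v$ and equalizes the $s_k$. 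The trade-off: the paper's proof is a one-step correspondence but leans on an external hardness result that the reader must look up, whereas your proof is self-contained, starts from a canonical strongly \textsc{NP-hard} problem, and (because $X$ is polynomial in the unary size of the \textsc{3-Partition} instance) actually yields \emph{strong} \textsc{NP-hardness} for \textsc{Merge Partitions}---a slightly sharper conclusion than the paper states.
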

\begin{proof}
This follows by a reduction from a partitioning problem studied by Huang et al.\cite{orpheus}, where records are shared across versions of datasets, leading to overlap.
Their goal is to divide the set of all versions into different groups, 
and simply store the groups (or, merges), reducing overall storage space and the overall average checkout cost, assuming equal frequency 
of checkout of each version. 
They show the problem of minimizing the checkout cost while keeping the 
storage cost less than a threshold is \textsc{NP-hard}. We construct an instance of \textsc{MergePartitions}, where corresponding to each version, we create a query family (initial partition), and corresponding to each record we create a file, that can be shared across partitions. We want to merge them into groups, reducing overall storage cost, keeping read costs under a threshold. Assuming equal access frequency, it can be seen that the decision version of \textsc{MergePartitions} reduces to the decision version of \textsc{Minimize Checkout Cost} (that is, storage cost $\leq \gamma$ and read cost $\leq C_{thresh}$), shown to be  \textsc{NP-hard} by Huang et al. 
Details omitted due to lack of space.  
\end{proof}

\subsubsection{Algorithm for the General Case: \textsc{G-Part}}
In order to understand the merging problem better, let us consider a graph representation $\mathcal{G} = (\mathcal{V}, \mathcal{E}, \mathcal{W})$ where each initial partition $P_i$ is a node $\in \mathcal{V}$ in graph $\mathcal{G}$. An edge $e = (v, u, w)$ between two vertices $v$ and $u$ in $\mathcal{V}$ with weight $w = \frac{Ov(v,u)}{Sp(v\cup u)}>0$ denotes the 
fractional overlap between the 
partitions $v$ and $u$. ($w=0$ corresponds to no overlap between two partitions, hence,
there is no edge between them). Now, merging can be thought of as merging of nodes to create meta-vertices, collapsing the internal edges, and re-defining edges incident on the meta-vertices from neighbors. (Fig.\ref{fig:partitioning}(c)). 

We give a greedy algorithm \textsc{G-Part} for the general graph case that 
does very well in practice, especially as a key ingredient in the unified pipeline SCOPe and also helps baselines improve significantly. In this algorithm, along with the hard feasibility (based on accesses, as defined earlier) constraints, 
we address a soft constraint $S_{thresh}$ on the span of merges. Specifically, once a merge is $\geq S_{thresh}$, we don't 
merge other partitions to it. The intuition is to prevent the merging 
of too many vertices together, to avoid undue increase in read costs. 

\begin{algorithm}[h]
\caption{\textsc{G-Part}: Partition Merging Algorithm}\label{alg:merging}
\KwData{${P} = \textrm{initial set of partitions}$}
\KwResult{$P = \textrm{new set of partitions after merging}$}
\SetKwFunction{FMerge}{Merge}
$H = [\;]$\tcp*{\small{Max-heap}}
\For{$i \in P$}{
  \For{$j \in P$}{
    \If{$(i, j)$ \textrm{meet merging criteria}}{ 
      $f_{ij} \gets \textrm{fraction of non-overlap}$\;
      $H.push(f_{ij}, i, j)$\;
    }
  }
}
$H.heapify()$\;
$D = \{\}$\tcp*{\small{To store deleted partitions}}
\While{$!H.isempty()$}{
  $f_{ij}, i, j = H.pop()$\;
  \If{$i \in D$ \textbf{or} $j \in D$}{
    \textbf{continue}
  }
  $D.extend\left([i, j]\right)$\;
  $m = \FMerge(i, j)$\tcp*{\small{Merge partition rows}}
  $P.add(m)$\;
  \If{$m.size() < S_{thresh}$}{ 
      \For{$k \in P$}{
        \If{$k \in D$ \textbf{or} $k == m$}{
        \textbf{continue}\;
        }
        \If{$(m, k)$ \textrm{meet merging criteria}}{
            $H.push\left(f_{mk}, m, k\right)$\;
          }
      }
  }
}
\For{$i \in D$}{
  $P.remove(i)$
}
\end{algorithm}
We next describe \textsc{G-Part} informally. (The pseudocode is given in Algorithm \ref{alg:merging}.) 
We first filter the edges to determine the set of feasible edges. 
We store the edges in a 
max-heap, where the heapification is on the weights (denoting the fractional overlaps) 
of the edges. We pick the top most heap element edge (this has the highest fractional 
overlap between the pair of segments) and merge the corresponding 
pair of partitions. Let $u$ and $v$ be the corresponding nodes. 
We create a new (merge) node $u'$, while removing $u$ and $v$ from $\mathcal{V}$. 
$\mathcal{V}$ is updated as $u' \cup \{\mathcal{V}\setminus \{u, v\}\}$. 
Similarly, the edge $e_{v,u}$ is deleted 
from $\mathcal{E}$. 
If the span of the merged node $Sp(u') \geq S_{thresh}$, for some 
constant $S_{thresh}$, then we don't consider 
the merged node any further. Specifically, we remove every edge $e' = (w,x)$, 
where $x \in \{u, v\})$, for any $w \in \mathcal{V}$ from $\mathcal{E}$, and delete these edges from 
the heap. However, if $Sp(u') < S_{thresh}$, then it goes back as a candidate 
for further merging. In this case, for 
every edge $e' = (w,x)$ for any $x \in \{u,v\}$ and $w \in \mathcal{V}$, we replace 
it with $e'' = (w, u')$ in $\mathcal{E}$ (and delete $e'$ from 
the heap, if it was present in the heap). If $e''$ satisfies the feasibility constraints, 
we add it to the heap with a weight corresponding to the fractional overlap of $u'$ with $w$. Now, we repeat the process with the next top heap element, 
till the heap is empty. Note that at the end we might be left with singleton 
partitions that do not meet the feasibility constraints for merging.  
These are (individually) added to the set of final merges or partitions.

\begin{figure*}
\caption{(a) Data partitioning examples. (b) Bipartite matching for equal sized partitions with no compression. (c) Merging of nodes by \textsc{G-Part} in a graph setting. }
\centering
\begin{minipage}[b]{.34\textwidth}
    \includegraphics[width=\linewidth, height=0.6\linewidth]{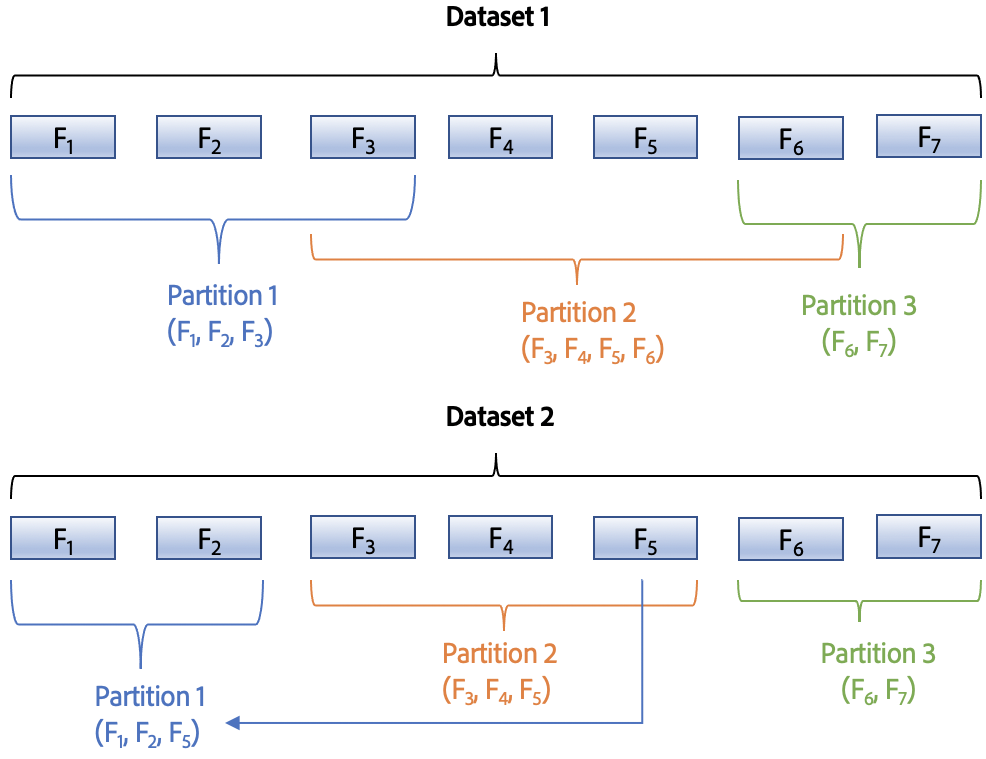}
    \subcaption{\small{}}
\end{minipage}
\hfill
\begin{minipage}[b]{.31\textwidth}
    \centering
\includegraphics[width=0.7\linewidth, height=0.645\linewidth]{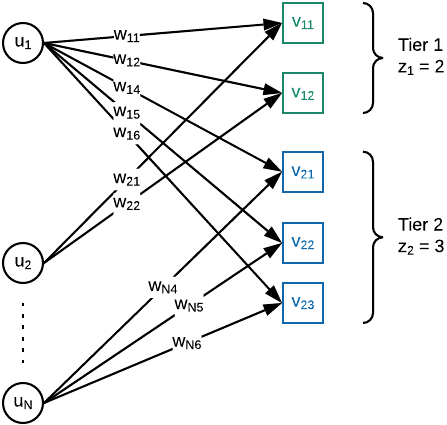}
  \subcaption{\small{}}
\end{minipage}%
\hfill
\begin{minipage}[b]{.32\textwidth}
  \centering\includegraphics[width=\linewidth, height=0.5\linewidth]{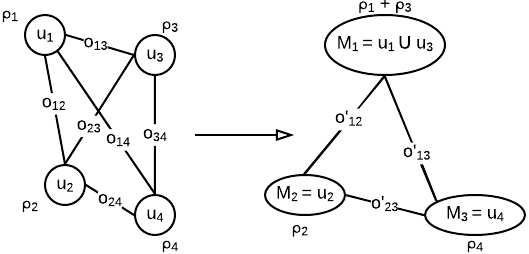}
  \subcaption{\small{}}
\end{minipage}
\label{fig:partitioning}
\end{figure*}

\textbf{Space and Cost trade-off achieved by \textsc{G-Part}:}
We evaluate \textsc{G-Part} on TPC-H 1GB and TPC-H 100GB to compare the duplication of data with the cost of merging in Fig. \ref{fig:part-tradeoff}\footnote{Duplication is computed as $1 - \frac{|\{P_i\}|}{|P_i|}$, where $\{P_i\}$ denotes the set of distinct elements (records) in $P_i$. The cost of merging is computed as the increase in expected read cost due to merging.}.
\textsc{G-Part} provides a good trade-off between the unmerged case and merging all partitions. 

\textbf{Complexity of \textsc{G-Part}}: Consider there are $m$ query families. Let the number of files across all datasets in a client org be $n$. Initial processing to generate initial partitions would require a space of at most $O(m n)$ (In general it would be much less, say, $O(m k)$ where $k$ is the average number of files accessed by each query family). There would be $m$ initial partitions, and $O(m^2)$ edges. For estimating the cost of each edge, one would need to find the intersection of the sets (of files in each partitions) and the total length of each partition (sum of the size of the files, maintained as file meta data). The heap construction followed by \textsc{G-Part} merging and heapifying would take $O(m^2 \log m)$. This can adapt dynamically to changing query workloads. For each new query family observed in the workload, we create an initial partition, by labeling the query family with the accessed files. This results in a new node in the graph. If the current number of merged partitions is $m' \ll m$, at most $O(m')$ new edges get added. This can result in $O(m' \log m')$ operations for heapifying followed by merging operations.

\subsection{Special Case: Ordered Partitions}
Consider an inherent ordering between the files, such as that arising for time series data. Let us  
assume that the data is time stamped. Each query, and 
hence partition $P_i$ has a fixed start time $s(P_i)$ and a fixed end time $e(P_i)$. Let us order the partitions by their end times. We only consider distinct queries. 
Let this ordered set be $\mathcal{P}$, 
where $|\mathcal{P}| = N$.
Partition $P_i \in \mathcal{P}$, has the $i^{th}$ latest end time and $P_1$ has the first end time.

Since the main motivation in merging is exploiting the overlap between 
partitions or segments, we only consider combinations of adjacent segments in the order in which they occur 
in the ordered list. 
More specifically, consider partitions $\{P_i, P_{i+1}, P_{i+2}\} \in \mathcal{P}$. 
The possible set of (merged) segments  
corresponding to these are: 
(i) $\{P_i\}, \{P_{i+1}\}, \{P_{i+2}\}$, or, 
(ii) $\{P_i, P_{i+1}\}, \{P_{i+2}\}$, or, 
(iii) $\{P_i\}, \{P_{i+1}, P_{i+2}\}$, or, 
(iv) $\{P_i, P_{i+1}, P_{i+2}\}$, 
\textbf{but not} $[\{P_{i}, p_{i+2}\}, \{p_{i+1}\}]$. 

The number of possible merges, that is, $|\mathcal{M}|$ is $O(N^2)$. 
Without loss of generality, we assume that for every $P_i$, with end time $e(P_i)$, the start time of 
$P_{i+1}$, $s(P_{i+1}) < e(P_i)$. (For any pair of $i$ and $i+1$ where this does not hold, we can consider the 
set of partitions $\{P_1, \ldots, P_i\}$ and $\{P_{i+1}, \ldots, P_{N}\}$ to be disjoint and solve 
the merging separately for each set.)

We define a dynamic program here. Consider the sub-problem of covering partitions 
$[P_1, \ldots, P_i]$. 
Define the set of feasible merges containing partition $P_i$ as $\mathcal{F}_i$. 
As defined earlier, 
$\mathcal{F}_i = \{[P_1, P_2, \ldots, P_i], [P_2, \ldots, P_i], \ldots, [P_i]\}$. 
Any feasible solution 
on partitions $[P_1, \ldots, P_i]$ must include a merge in $\mathcal{F}_i$. For ease of analysis, WLOG, we add a dummy partition $P_0$ of $Sp(P_0) = 0$ and $\rho(P_0) = 0$. 

Let us denote the merge $[P_{i-k}, \ldots, P_i]$ as $\mathcal{M}^k_i$ for $k \in \{0,1,\ldots, i-1\}$. 
We define the parent of $\mathcal{M}^k_i$ as $P(\mathcal{M}^k_i) := P_{i-k-1}$ for $k \leq i-1$ ($P_0$ for $k=i-1$). The notion of parent simply implies that if a feasible solution 
chooses $\mathcal{M}^k_i$, then
(i) it must include additional merges to cover 
$[P_0, \ldots, P(\mathcal{M}^k_i)]$, and 
such a solution must fit 
within the remaining cost budget after the choice of $\mathcal{M}^k_i$. The recurrence relations are: 

For $i=0$, $ALG[P_0, C] = 0\ \forall C\geq 0$.

For $i>0$ and $0\leq C \leq C_{thresh}$, \\
$ALG[P_i, C] = \min_{k \in [0, \ldots, i-1]| C(\mathcal{M}^k_i) \leq C}\ ALG[(P(\mathcal{M}^k_i)), C - C(\mathcal{M}^k_i)] + Sp(\mathcal{M}^k_i) \  \forall\  i\in [N],\ \forall \  0<c\leq C$. 

\begin{theorem}
\label{thm:alg-opt}
$ALG(P_N, C_{thresh})$ \textbf{minimizes the overall space}  given a budget $C_{thresh}$ on the total (expected) read cost .
\end{theorem}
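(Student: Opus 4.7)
The plan is to prove Theorem~\ref{thm:alg-opt} by strong induction on the partition index $i$, showing that for every $i \in \{0,1,\ldots,N\}$ and every budget $C \in [0, C_{thresh}]$, the quantity $ALG[P_i, C]$ equals the minimum total span achievable by any feasible solution that covers $\{P_1, \ldots, P_i\}$ using merges whose total read cost is at most $C$. The base case $i=0$ is immediate from the dummy $P_0$ with $Sp(P_0)=0$. For the inductive step, I would exploit the structural restriction already established in the setup, namely that we need only consider merges of contiguous (ordered) partitions; this restriction is what ensures that the sub-problems defined over prefixes are well-posed.

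The heart of the argument has two directions. First, I would show \emph{soundness}: every value produced by the recurrence corresponds to a genuine feasible solution. This follows because $ALG[P_i, C]$ is defined only over merges $\mathcal{M}^k_i$ whose individual cost $C(\mathcal{M}^k_i) \leq C$, and the remaining budget $C - C(\mathcal{M}^k_i)$ is passed to the sub-problem on the prefix ending at $P(\mathcal{M}^k_i)$; by induction the sub-problem value is feasible, and stitching the two together produces a cover of $\{P_1,\ldots,P_i\}$ whose total read cost respects $C$ (since the merges chosen for the prefix are disjoint in cost from $\mathcal{M}^k_i$) and whose total span is exactly the value returned. Second, I would show \emph{optimality}: consider any optimal feasible solution $\mathsf{OPT}_i(C)$ for $\{P_1,\ldots,P_i\}$ under budget $C$. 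Because all merges are contiguous and every partition must be covered, $P_i$ lies in some merge of the form $\mathcal{M}^{k^*}_i = [P_{i-k^*},\ldots,P_i]$. Removing this merge leaves a feasible cover of $\{P_1, \ldots, P(\mathcal{M}^{k^*}_i)\}$ using budget at most $C - C(\mathcal{M}^{k^*}_i)$, and by the inductive hypothesis its span is at least $ALG[P(\mathcal{M}^{k^*}_i), C - C(\mathcal{M}^{k^*}_i)]$. Adding back $Sp(\mathcal{M}^{k^*}_i)$ and taking the minimum over all $k$ shows that the recurrence value is at most the optimal one, matching the direction above.

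The key subtlety — and the step I expect to be the main obstacle — is justifying that the residual cover on the prefix is independent of the choice of the terminal merge $\mathcal{M}^{k^*}_i$, i.e.\ that read-cost budgets compose additively across a prefix/suffix split. This is true because the cost $C(\mathcal{M}_k) = Sp(\mathcal{M}_k)\,\rho(\mathcal{M}_k)$ sums linearly over the chosen merges and the objective $\sum_k Sp(\mathcal{M}_k) y_k$ is also additive, so once $\mathcal{M}^{k^*}_i$ is fixed, the prefix sub-problem is exactly a smaller instance of the original problem with reduced budget $C - C(\mathcal{M}^{k^*}_i)$. I would write this out explicitly to rule out any interaction between the terminal merge and the prefix merges (e.g.\ there is no overlap in space accounting across the prefix/terminal split because contiguity forces $P(\mathcal{M}^{k^*}_i) = P_{i-k^*-1}$ and the prefix covers only $\{P_1,\ldots,P_{i-k^*-1}\}$).

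Finally, I would conclude by setting $i = N$ and $C = C_{thresh}$: the recurrence value $ALG[P_N, C_{thresh}]$ equals the minimum total span over all feasible covers of $\{P_1,\ldots,P_N\}$ with total read cost at most $C_{thresh}$, which is exactly the objective of the ILP in Equation~\ref{eq:nonlinear} restricted to contiguous merges. A brief complexity remark (table of size $O(N\,C_{thresh})$, transition cost $O(N)$) can be included if space permits, but it is not needed for correctness.
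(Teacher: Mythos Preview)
Your proposal is correct and follows essentially the same approach the paper indicates: the paper states that the proof ``follows by induction, based on induction hypothesis, after proving the optimality of base cases,'' and your strong-induction argument over prefixes with the soundness/optimality split is exactly that scheme, spelled out in full. Your explicit treatment of the additive decomposition of cost and span across the prefix/terminal-merge split is a welcome elaboration of what the paper leaves implicit.
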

Proof omitted due to lack of space, however it follows by induction, based on induction hypothesis, after proving the  optimality of base cases. The time complexity of $ALG$ is $O(N^2 C_{thresh})$, which makes it pseudo-polynomial solution because of the dependence on $C_{thresh}$. 
To get a polynomial approximation scheme, we bucket the range of $C_{thresh}$.

\begin{theorem}
\label{thm:polyapprox}
Let the optimal solution for $N$ partitions with a cost threshold $C_{OPT}$ require space
$OPT[N,C_{OPT}] = S_{OPT}$. Then there exists a polynomial algorithm that finds a solution of space $\leq S_{OPT}$, 
within a cost at most $(1+N\epsilon) C_{OPT}$ in $O(N^2(N + \frac{1}{\epsilon}))$ time for any fixed $\epsilon >0 $.
For $\epsilon = \frac{1}{N}$, we get a $(1, 2)$ bi-criteria 
approximation of $(S_{OPT}, C_{OPT})$ in $O(N^3)$. 
\end{theorem}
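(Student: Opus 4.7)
The plan is to convert the pseudo-polynomial DP from the previous theorem into an FPTAS by bucketing the cost dimension, in direct analogy with the classical knapsack FPTAS. First I would preprocess every candidate merge: for each $(i,k)$ with $i\in[N]$ and $k\in\{0,\ldots,i-1\}$, compute $Sp(\mathcal{M}^k_i)$ incrementally by extending $\mathcal{M}^{k-1}_i$ with $P_{i-k}$, and then $C(\mathcal{M}^k_i) = Sp(\mathcal{M}^k_i)\,\rho(\mathcal{M}^k_i)$. Since there are $O(N^2)$ merges and each extension touches at most $O(N)$ files, this takes $O(N^3)$ time, which is absorbed into the stated bound.

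Next, choose a rounding unit $\delta := \epsilon\, C_{OPT}$ and define rounded costs $\tilde{C}(\mathcal{M}^k_i) := \lceil C(\mathcal{M}^k_i)/\delta \rceil \cdot \delta$. Run the same recurrence as $ALG$ with $\tilde{C}$ in place of $C$ and enlarge the budget to $\tilde{C}_{thresh} := (1 + N\epsilon)\, C_{OPT}$. Because every rounded cost is a multiple of $\delta$, the second coordinate of the DP needs to take only $\tilde{C}_{thresh}/\delta + 1 = N + 1/\epsilon + O(1)$ distinct values; the table therefore has $O(N(N + 1/\epsilon))$ entries, each filled in $O(N)$ time by trying the $k$ possible left endpoints, giving the claimed $O(N^2(N + 1/\epsilon))$ running time.

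The approximation guarantee follows from two elementary observations. First, because the partitions are ordered and each feasible merge is a contiguous block $[P_{i-k},\ldots,P_i]$, any cover of $P_1,\ldots,P_N$ uses at most $N$ merges (at most one chosen merge per right endpoint). Hence the optimum cover, re-evaluated under $\tilde{C}$, has total rounded cost at most $C_{OPT} + N\delta = (1+N\epsilon)\, C_{OPT} = \tilde{C}_{thresh}$, so it remains feasible for the rounded DP and certifies an achievable space of $S_{OPT}$; the rounded DP therefore returns a cover of space $\leq S_{OPT}$. Second, since $\tilde{C}(\mathcal{M}) \geq C(\mathcal{M})$ pointwise, the true cost of the cover the DP returns is at most its rounded cost, hence at most $(1+N\epsilon)\, C_{OPT}$. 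Setting $\epsilon = 1/N$ yields the $(1,2)$ bi-criteria approximation of $(S_{OPT}, C_{OPT})$ in $O(N^3)$ time.

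The main delicate point is the uniform $N$ bound on the number of merges used by any feasible cover, since this is what caps the total per-solution rounding slack at $N\delta$; this is immediate from the interval structure of $\mathcal{F}_i$ but needs to be stated carefully. A secondary sanity check is that the rounded recurrence remains exact on the rounded-cost instance -- rounding is a monotone perturbation that only inflates costs and never changes the combinatorial set of feasible covers -- so the optimality argument of $ALG$ transfers verbatim to the rounded DP.
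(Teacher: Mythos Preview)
Your proposal is correct and follows essentially the same approach as the paper: round each merge cost up to a multiple of $\epsilon\,C_{OPT}$, inflate the budget to $(1+N\epsilon)C_{OPT}$, and rerun the DP on the discretized cost axis, using the bound of at most $N$ merges per cover to control the rounding slack. The paper's sketch is terser but the structure is identical; your added details (the $O(N+1/\epsilon)$ bucket count yielding the $O(N^2(N+1/\epsilon))$ table fill, and the two-direction inequality between rounded and true cost) are exactly what the omitted proof would need.
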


Proof is omitted due to lack of space. The main idea is to discretize the range of cost values by rounding up by $\epsilon$, extending the cost threshold by $N\epsilon$, and solving ALG on this setting. It can be argued that extending the cost threshold by $N\epsilon$ would ensure a feasible solution exists. By optimality of ALG, the space required by ALG would be minimum, hence $\leq S_{OPT}$, and by feasibility of ALG, the total cost would be bounded by $\leq (1+N\epsilon) C_{OPT}$. For $\epsilon \leq 1/N$, the cost is $\leq 2 C_{OPT}$, giving the $(1,2)$ bi-criteria solution.

\begin{figure}[htbp]
\caption{Space cost tradeoffs in  partitioning. Each dot in the scatterplot represents a table. We consider 3 cases - (i) no merging, (ii) \textsc{G-Part} heuristic, and (iii) merging all partitions. Left: TPC-H 1GB, Right: TPC-H 100GB.}
\label{fig:part-tradeoff}
\centering
\begin{minipage}[b]{.48\linewidth}
  \includegraphics[width=\linewidth]{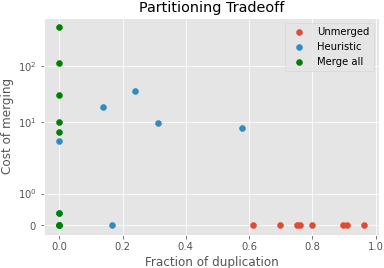}
\end{minipage}
\hfill
\begin{minipage}[b]{.49\linewidth}
    \includegraphics[width=\linewidth]{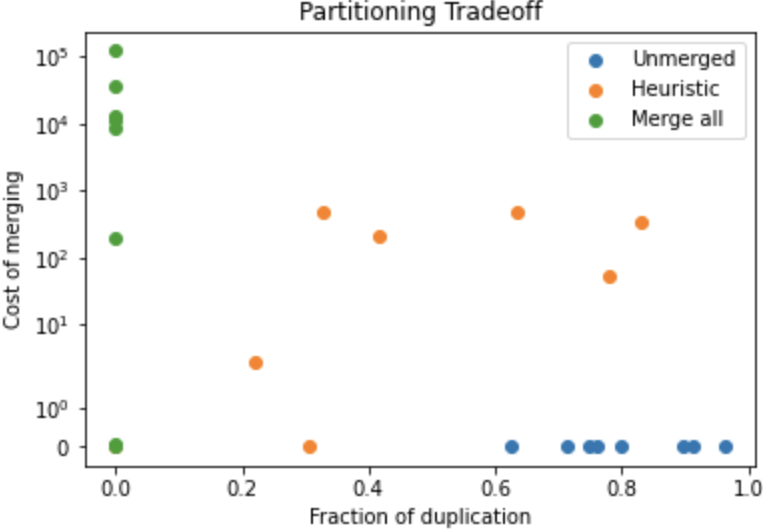}
\end{minipage}
\end{figure}
\section{Unified pipeline \textsc{SCOPe}}
\label{sec:experiments}
Here we present the unified pipeline SCOPe that combines all the modules \textsc{OptAssign}, \textsc{ComPredict} and \textsc{G-Part} to optimize the overall costs, while maintaining performance guarantees. We use TPC-H 100GB, TPC-H 1TB datasets and Enterprise datasets I for these experiments. 
The pipeline is as follows. 
First we generate initial partitions using query logs. These are merged using \textsc{G-Part} to generate final partitions. After this, \textsc{ComPredict} predicts the compression ratio and decompression speeds for each partition. Finally, \textsc{OptAssign} finds optimal tier and compression scheme assignment for the partitions, minimizing the overall costs, including storage and read costs, subject to capacity constraints and latency SLAs. 

\begin{table*}[htbp]
\caption{Results for Enterprise Data II.}
\resizebox{\textwidth}{!}{%
\begin{tabular}{|c|c|c|c|c|c|c|c|c|c|c|c|}
\hline
\textbf{\multirow{2}{*}{Variants we can support}} & \textbf{Other methods} & \textbf{\multirow{2}{*}{P}} & \textbf{\multirow{2}{*}{T}} & \textbf{\multirow{2}{*}{C}} & \textbf{Storage} & \textbf{Decomp.} & \textbf{Read} & \textbf{Total} & \textbf{Read Latency} & \textbf{Expected Decomp.} & \textbf{Tiering} \\
 & \textbf{we can adapt} &  &  &  & \textbf{Cost} & \textbf{Cost} & \textbf{Cost} & \textbf{Cost} & \textbf{(TTFB, s)} & \textbf{Latency (ms)} & \textbf{Scheme} \\ \hline\hline
Default (store on premium) & - & - & - & - & 150.1 & 0.0	& 18.74 &	168.9 &	0.024 & 0.0 & {[}3, 0, 0{]} \\ \hline
Compress \& store on premium & Ares & - & - & Y & 138.8 & 0.1 & 18.5 & 157.4 & 0.024 & 0.016 & {[}3, 0, 0{]} \\ \hline
Multi-Tiering & Hermes & - & Y & - & 20 & 0.0 & 62 & 82 & 0.281 & 0.0 & {[}0, 2, 1{]} \\ \hline
Latency time focused & HCompress & - & Y & Y & 49.6 & 0.0 & 49.4 & 98.9 & 0.165 & 0.0 & {[}2, 1, 0{]} \\ \hline\hline
Partition \& store on premium & - & Y & - & - & 102.7 & 0.0 & 1.2 & 103.9 & 0.024 & 0.0 & {[}23, 0, 0{]} \\ \hline
Partitioning + Tiering & Hermes + \textsc{G-Part} & Y & Y & - & 36.3 & 0.0 & 26.7 & 62.9 & 0.281 & 0.0 & {[}0, 4, 19{]} \\ \hline
Partitioning + Compression & Ares + \textsc{G-Part} & Y & - & Y & 130.1 & 0.8 & 2.3 & 133.1 & 0.024 & 0.170 & {[}23, 0, 0{]} \\ \hline
\hline
\textbf{SCOPe (Latency time focused)} & \textbf{HCompress + \textsc{G-Part}} & \textbf{Y} & \textbf{Y} & \textbf{Y} & \textbf{94.9} & \textbf{0.0} & \textbf{26.4} & \textbf{121.2} & \textbf{0.164} & \textbf{0.0001} & \textbf{{[}16, 3, 4{]}} \\ \hline
\textbf{SCOPe (No capacity constraint)} & \textbf{-} & \textbf{Y} & \textbf{Y} & \textbf{Y} & \textbf{22.7} & \textbf{0.6} & \textbf{7.0} & \textbf{30.3} & \textbf{0.216} & \textbf{0.131} & \textbf{{[}2, 11, 10{]}} \\ \hline
\textbf{SCOPe (Read+Decomp. cost focused)} & \textbf{-} & \textbf{Y} & \textbf{Y} & \textbf{Y} & \textbf{75.5} & \textbf{0.5} & \textbf{5.2} & \textbf{81.2} & \textbf{0.084} & \textbf{0.110} & \textbf{{[}6, 15, 2{]}} \\ \hline
\textbf{SCOPe (Total cost focused)} & \textbf{-} & \textbf{Y} & \textbf{Y} & \textbf{Y} & \textbf{22.7} & \textbf{0.6} & \textbf{7.0} & \textbf{30.3} & \textbf{0.216} & \textbf{0.131} & \textbf{{[}2, 11, 10{]}} \\ \hline
\end{tabular}}
\label{tab:optim-private}
\end{table*}
\begin{table*}[htp]
\caption{Results for the TPC-H dataset (100GB).}
\resizebox{\textwidth}{!}{%
\begin{tabular}{|c|c|c|c|c|c|c|c|c|c|c|c|}
\hline
\textbf{\multirow{2}{*}{Variants we can support}} & \textbf{Other methods} & \textbf{\multirow{2}{*}{P}} & \textbf{\multirow{2}{*}{T}} & \textbf{\multirow{2}{*}{C}} & \textbf{Storage} & \textbf{Decomp.} & \textbf{Read} & \textbf{Total} & \textbf{Read Latency} & \textbf{Expected Decomp.} & \textbf{Tiering} \\
 & \textbf{we can adapt} &  &  &  & \textbf{Cost} & \textbf{Cost} & \textbf{Cost} & \textbf{Cost} & \textbf{(TTFB, s)} & \textbf{Latency (ms)} & \textbf{Scheme} \\ \hline\hline
Default (store on premium) & - & - & - & - & 8741.9	& 0.0 & 	3828.5	& 12570.4	& 0.18 & 0.0 & {[}8, 0, 0{]} \\ \hline
Compress \& store on premium & Ares & - & - & Y & 7138.2 &	121.1 & 3387.5 & 10646.8 & 0.18 & 3.61 & {[}8, 0, 0{]} \\ \hline
Multi-Tiering & Hermes & - & Y & - & 8741.8 & 0.0 & 3828.5 & 12570.4 & 0.18 & 0.0 & {[}5, 3, 0{]} \\ \hline
Latency time focused & HCompress & - & Y & Y & 3288.4 & 0.0 & 22805.0 & 26093.4 & 0.68 & 0.0 & {[}7, 0, 1{]} \\ \hline\hline
Partition \& store on premium & - & Y & - & - & 8702.6 & 0.0 & 117.3 & 8819.9 & 0.18 & 0.0 & {[}137, 0, 0{]} \\ \hline
Partitioning + Tiering & Hermes + \textsc{G-Part} & Y & Y & - & 1397.0 & 0.0 & 415.3 & 1812.4 & 2.06 & 0.0 & {[}0, 94, 43{]} \\ \hline
Partitioning + Compression & Ares + \textsc{G-Part} & Y & - & Y & 5480.4 & 32.1 & 60.9 & 5573.4 & 0.18 & 0.96 & {[}137, 0, 0{]} \\ \hline
\hline
\textbf{SCOPe (Latency time focused)} & \textbf{HCompress + \textsc{G-Part}} & \textbf{Y} & \textbf{Y} & \textbf{Y} & \textbf{5178.1} & \textbf{0.0} & \textbf{544.5} & \textbf{5722.6} & \textbf{0.48} & \textbf{0.0} & \textbf{{[}108, 0, 29{]}} \\ \hline
\textbf{SCOPe (No capacity constraint)} & \textbf{-} & \textbf{Y} & \textbf{Y} & \textbf{Y} & \textbf{691.4} & \textbf{29.9} & \textbf{219.3} & \textbf{940.6} & \textbf{2.06} & \textbf{0.89} & \textbf{{[}0, 94,43{]}} \\ \hline
\textbf{SCOPe (Read+Decomp cost focused)} & \textbf{-} & \textbf{Y} & \textbf{Y} & \textbf{Y} & \textbf{4733.9} & \textbf{17.4} & \textbf{80.9} & \textbf{4832.1} & \textbf{0.35} & \textbf{0.52} & \textbf{{[}103, 34, 0{]}} \\ \hline
\textbf{SCOPe (Total cost focused)} & \textbf{-} & \textbf{Y} & \textbf{Y} & \textbf{Y} & \textbf{679.2} &\textbf{31.1} & \textbf{242.4} & \textbf{952.7} & \textbf{2.06} & \textbf{0.93} & \textbf{{[}0, 82, 55{]}}\\ \hline
\end{tabular}}
\label{tab:optim-100}
\end{table*}

\begin{table*}[htp]
\caption{Results for TPC-H dataset (1TB), (K refers to a multiplicative factor of $10^3$.)}
\resizebox{\textwidth}{!}{%
\begin{tabular}{|c|c|c|c|c|c|c|c|c|c|c|c|}
\hline
\textbf{\multirow{2}{*}{Variants we can support}} & \textbf{Other methods} & \textbf{\multirow{2}{*}{P}} & \textbf{\multirow{2}{*}{T}} & \textbf{\multirow{2}{*}{C}} & \textbf{Storage} & \textbf{Decomp.} & \textbf{Read} & \textbf{Total} & \textbf{Read Latency} & \textbf{Expected Decomp.} & \textbf{Tiering} \\
 & \textbf{we can adapt} &  &  &  & \textbf{Cost} & \textbf{Cost} & \textbf{Cost} & \textbf{Cost} & \textbf{(TTFB, s)} & \textbf{Latency (ms)} & \textbf{Scheme} \\ \hline\hline
Default (store on premium) & - & - & - & - & 89.23K	& 0.0 & 39.13K & 128.36K & 0.18 & 0.0 & {[}8, 0, 0{]} \\ \hline
Compress \& store on premium & Ares & - & - & Y & 73.79K	& 3.36K	& 34.85K	& 112.01K & 0.18 & 100.31 & {[}8, 0, 0{]} \\ \hline
Multi-Tiering & Hermes & - & Y & - & 89.11K & 0.0 & 38.94K & 128.05K & 0.18 & 0.0 & {[}5, 3, 0{]} \\ \hline
Latency time focused & HCompress & - & Y & Y & 41.58K & 0.0 & 242.47K & 284.05K & 1.07 & 0.0 & {[}6, 2, 0{]} \\ \hline\hline
Partition\& store on premium & - & Y & - & - & 81.37K & 0.0 & 3.16K & 84.53K & 0.18 & 0.0 & {[}212, 0, 0{]} \\ \hline
Partitioning + Tiering & Hermes + \textsc{G-Part} & Y & Y & - & 26.77K & 0.0 & 7.51K & 34.28K & 2.91 & 0.0 & {[}0, 148, 64{]} \\ \hline
Partitioning + Compression & Ares + \textsc{G-Part} & Y & - & Y & 47.05K & 2.20K & 1.13K & 50.38K & 0.18 & 65.68 & {[}212, 0, 0{]} \\ \hline
\hline
\textbf{SCOPe (Latency time focused)} & \textbf{HCompress + \textsc{G-Part}} & \textbf{Y} & \textbf{Y} & \textbf{Y} & \textbf{64.68K} & \textbf{0.0} & \textbf{4.76K} & \textbf{69.44K} & \textbf{1.44} & \textbf{0.0} & \textbf{{[}101, 77, 34{]}} \\ \hline
\textbf{SCOPe (No capacity constraint)} & \textbf{-} & \textbf{Y} & \textbf{Y} & \textbf{Y} & \textbf{17.93K} & \textbf{1.03K} & \textbf{6.46K} & \textbf{25.42K} & \textbf{2.91} & \textbf{30.89} & \textbf{{[}0, 176, 36{]}} \\ \hline
\textbf{SCOPe (Read+Decomp cost focused)} & \textbf{-} & \textbf{Y} & \textbf{Y} & \textbf{Y} & \textbf{61.30K} & \textbf{0.78K} & \textbf{1.66K} & \textbf{63.74K} & \textbf{1.15} & \textbf{23.32} & \textbf{{[}89, 123, 0{]}} \\ \hline
\textbf{SCOPe (Total cost focused)} & \textbf{-} & \textbf{Y} & \textbf{Y} & \textbf{Y} & \textbf{15.14K} &\textbf{0.12K} & \textbf{4.53K} & \textbf{19.79K} & \textbf{3.20} & \textbf{36.63} & \textbf{{[}0, 155, 57{]}}\\ \hline
\end{tabular}}
\label{tab:optim-1T}
\end{table*}

\textbf{Comparison with Baselines: }
To the best of our knowledge, SCOPe as a pipeline is unique, and there are no direct baselines we could compare with. However, by tuning the parameters of \textsc{OptAssign}, we can choose to optimize either only tiering and no compression ($K=0$), or, only compression and no tiering ($L=0$), or, minimize the latency due to read costs and decompression costs ($\alpha = 0$). These variants would map to an adaption of existing storage optimization approaches like HCompress\cite{devarajan2020hcompress} (focused on reducing latency), Hermes\cite{hermes} (focused on multi-tiering), and Ares\cite{ares} (focused on compression), which were originally designed for different settings with I/O workloads. Hence, these can be thought of as our baselines, adapted to our setting. We can see that overall, SCOPe performs very well and minimizes the costs while maintaining good trade-offs on the different costs and latency. Moreover, we show that applying \textsc{G-Part} to generate data partitions before applying the baseline methods, significantly improves the performance of baselines\footnote{All results are generated using ground truth compression data ensuring a fair comparison.}. 
The optimization takes about 47.4 ms on average (min 35 ms, max 470 ms) for optimization given one set of hyperparameters. Tuning the hyperparameters for optimization takes $\approx 18.9$s. 

All costs of Tables \ref{tab:optim-private}, \ref{tab:optim-100} and \ref{tab:optim-1T} are calculated over a 5.5 month duration using Azure cost parameters. We considered the cost savings opportunity by only considering Premium, Hot and Cool Layers (and not Archive, since that has an early deletion period of 6 months. We have examined Archive benefits in our Enterprise Data I experiments described earlier). 
Next we explain the structure of the tables and the results. 
The rows refer to policies. The column ('Other methods ...') refers to the closest baseline in the literature. The last column `Tiering Scheme' refers to the to the number of partitions (or, datasets) assigned to tiers [Premium, Hot, Cool] respectively. The other columns are self-explanatory. 

The first 4 rows of the table focus on standard approaches that are generally followed and these typically have higher total costs. Row 1 and 2 store everything on premium, generally incurs low read costs and high storage costs. Row 3 `Multi-Tiering' refers to optimal multi-tiering and incurs much lower storage costs, but read costs and read latencies are higher. Since there is no compression, there is no decompression cost or latency. Row 4 `Latency time focused' aims at minimizing the storage costs with a focus on keeping total latency low. Rows 5-7 employ \textsc{G-Part} on top of Rows 1, 3 and 2. We can see that the total costs are significantly lower now, while the latencies remain comparable. Also note that the total number of partitions has increased from the original number of datasets (as observed from the last column). 
\begin{table}[htbp]
\caption{Parameters for ILP Optimization on TPC-H.} 
\resizebox{\linewidth}{!}{%
\begin{tabular}{|l|l|l|l|l|}
\hline
\textbf{Parameters} & \textbf{Premium}   & \textbf{Hot} & \textbf{Cool} & \textbf{Archive} \\ \hline
Storage cost $C^s_\ell$ (cents/GB)     & 15        & 2.08     & 1.52    & 0.099   \\ \hline
Read cost $C^r_\ell$ (cents/GB)       & 0.004659 & 0.01331 & 0.0333 & 16.64   \\ \hline
Layer capacity $S_\ell$ (GB) & 0.163 & 0.326 & 0.4891 & inf \\ \hline
Read latency or TTFB (Time to first byte) $B_\ell$ (sec) & 0.0053    & 0.0614   & 0.0614  & 3600    \\ \hline
compute cost $C^c$ (cents/sec)  & \multicolumn{4}{c|}{0.001}\\ \hline
\end{tabular}}
\label{tab:params}
\end{table}
The last 4 rows illustrate the different variants of the entire pipeline SCOPe (with partitioning, multi-tiering and compression). These highlight how SCOPe can be tuned based on user requirements. We can see SCOPe optimizes the respective objectives while maintaining a very good trade-off on other metrics. SCOPe consistently performs well across all datasets at different scales, namely, enterprise, TPC-H 1GB (not shown here), 100GB and 1TB. `SCOPe (Total cost focused)' is consistently within $8$ - $18\%$ of the `Default' (platform baseline), and incurs the lowest cost among all the other baselines and variants.
Read latency (Time to First Byte) does not change significantly since it is independent of size (Table \ref{tab:params}). The expected decompression latency (average across accesses) and other costs grow with the size of partitions.
\section{Conclusion and Future Work}
\label{sec:conclusion}
We present SCOPe: a tunable framework that optimizes storage and access costs
on the cloud while maintaining latency guarantees. It is substantially better than baselines and works extremely well across different types and scales of data, giving cost benefits of the order of ~50\% or greater. 
Going forward, we want to extend SCOPe to optimize compute costs, including recommending optimal configurations. 

\textbf{Acknowledgements: } We would like to thank our colleages Shone Sadler, Dilip Biswal and Daniel Sirbu from Adobe Experience Platform for helping us get access to the data, costs and infrastructure for performing the experiments. We would especially thank Shone for the continual encouragement, guidance and help in this project. 
\bibliographystyle{plain}
\bibliography{ref}



\end{document}